\definecolor{bluegreen}{RGB}{0,153,153}
\definecolor{bluepurple}{RGB}{102,0,204}
\definecolor{bluered}{RGB}{204,0,102}
\newtheorem{theorem}{Theorem}
\begin{document}

\sloppy

%
\title{Online Deployment Algorithms for Microservice Systems with Complex Dependencies}
%
%
%
%

\author{Xiang~He,
        Zhiying~Tu,~\IEEEmembership{Member,~IEEE,}
        Markus~Wagner,~\IEEEmembership{Member,~IEEE,}
        Xiaofei~Xu,~\IEEEmembership{Member,~IEEE,}
        and~Zhongjie~Wang,~\IEEEmembership{Member,~IEEE,}%

\IEEEcompsocitemizethanks{
\IEEEcompsocthanksitem X. He, Z. Tu, X. Xu, and Z. Wang are with the Faculty of Computing, Harbin Institute of Technology, China.\protect\\
E-mail: \{hexiang, tzy\_hit, xiaofei, rainy\}@hit.edu.cn
\IEEEcompsocthanksitem M. Wagner is with the School of Computer Science, The University of Adelaide, Australia.\protect\\
E-mail: markus.wagner@adelaide.edu.au
}
}

\markboth{IEEE Transactions on Cloud Computing}%
{He \MakeLowercase{\textit{et al.}}: Online Deployment Algorithms for MicroserviceSystems with Complex Dependencies}
%



\IEEEtitleabstractindextext{%
\begin{abstract}

Cloud and edge computing have been widely adopted in many application scenarios. With the increasing demand of fast iteration and complexity of business logic, it is challenging to achieve rapid development and continuous delivery in such highly distributed cloud and edge computing environment. At present, microservice-based architecture has been the dominant deployment style, and a microservice system has to evolve agilely to offer stable Quality of Service (QoS) in the situation where user requirement changes frequently.
Many research have been conducted to optimally re-deploy microservices to adapt to changing requirements.
Nevertheless, complex dependencies between microservices and the existence of multiple instances of one single microservice in a microservice system have not been fully considered in existing works. This paper defines SPPMS, the Service Placement Problem in Microservice Systems that feature \textit{complex dependencies} and \textit{multiple instances}, as a Fractional Polynomial Problem (FPP) . Considering the high computation complexity of FPP, it is then transformed into a Quadratic Sum-of-Ratios Fractional Problem (QSRFP) which is further solved by the proposed greedy-based algorithms. Experiments demonstrate that our models and algorithms outperform existing approaches in both quality and computation speed. 


\end{abstract}

\begin{IEEEkeywords}
Cloud Computing, Microservice Systems, Service Placement, Service Dependencies, Multiple Instance Coexistence
\end{IEEEkeywords}}

\maketitle

\IEEEdisplaynontitleabstractindextext

%
\IEEEpeerreviewmaketitle

\IEEEraisesectionheading{\section{Introduction}\label{sec:introduction}}

%
%
%
%


 

Cloud computing is widely adopted in scenarios that involve Internet of Things (IoT) devices and mobile devices. It plays an essential role in improving performance and in reducing the energy consumption of local devices by offloading computing tasks from local devices to clouds~\cite{MEC_offloading_challenges}. It also empowers the system to process large amounts of data when the number of devices grows massively~\cite{Big_IoT_Data_cloud}. Furthermore, many other technologies like Edge Computing~\cite{edge_computing} and Fog Computing~\cite{fog_computing} are adopted to further enhance the Quality of Service (QoS) in different scenarios.

However, in cloud/edge/fog computing with monolithic applications, the increasingly complex business logic and variety of user requirements leads to a decline in DevOps performance. Thus, microservice and container technologies have been widely adopted to facilitate service management~\cite{docker_edge, osmotic_computing}. In addition, they ensure continuous delivery and deployment by dividing monolithic applications into several independent microservices and deploying them in containers~\cite{MicroDevOps}. Although the microservice design pattern constitutes an approach to software and systems architecture based on the concept of modularization and boundaries~\cite{Microservices}, the requests between microservices make dependencies between services more complex~\cite{sdg_dependency,version_based}. Moreover, the microservice-based applications in microservice systems share the same microservice set for similar functions, which makes the dependencies between services even more complex.

\figurename~\ref{fig:scheme_example} shows an example of services with complex dependencies, where different applications are highlighted using different colors. Because microservices can have many Application Programming Interfaces (API) with different functions, they could be called by both other services and users, like service 10 and 2 in \figurename~\ref{fig:scheme_example}. Services can also request different services according to different user requirements. For example, service 10 might call service 8 if service 10 is called by users directly, and it might call service 11 instead when called by service 7.


\begin{figure}[!t]
    \centering
    \includegraphics[width=\linewidth]{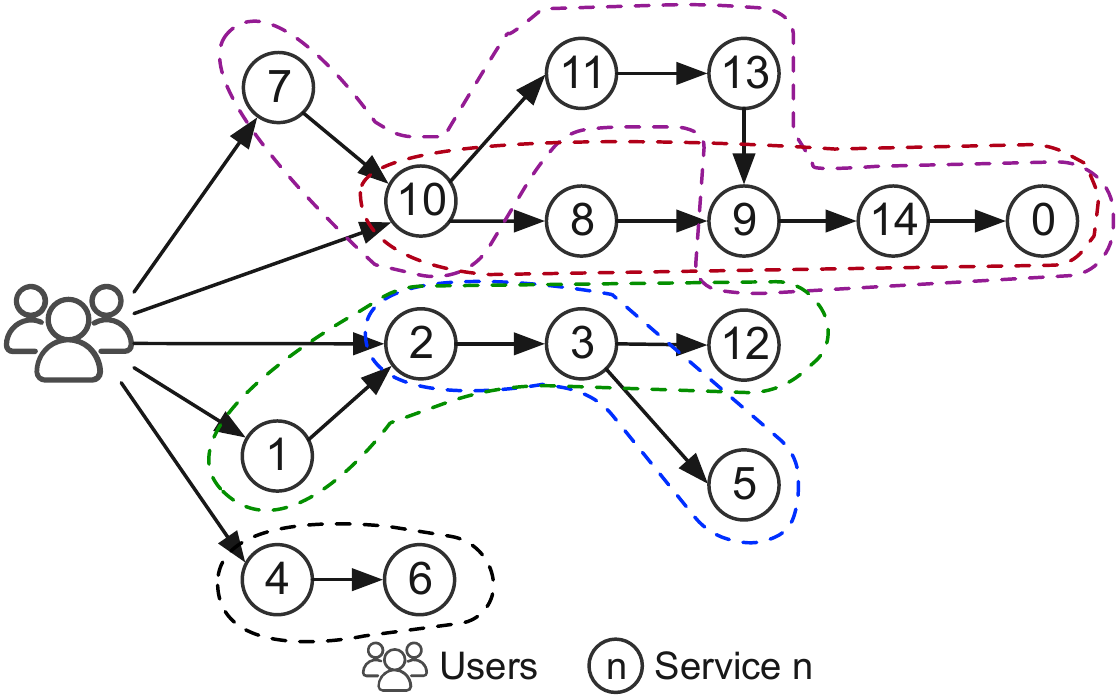}
    \caption{An example of services with complex dependencies}
    \label{fig:scheme_example}
\end{figure}

In operation, the scale of the microservice system can increases as users and services join, and user requirements also change frequently with time. The changes in user requirements can lead to a decline in QoS without evolving the system on time. The evolution represents the change of a system from a previous state to a new state by different technologies. Because re-deploying services according to new requirements at runtime is a common method of the evolution, an efficient online service deployment algorithm is needed to keep the QoS stable provided by the microservice system. The algorithm aims to generate an optimized service deployment scheme based on the current system status, and the system can execute the scheme as the evolution. While there already is some research focussing on the multi-component application placement algorithm for providing the best execution cost or QoS with resource constraint~\cite{TMC_deployment,GMCAPP,LDSPP}, some of the primary features in a microservice system are not fully considered:

\begin{itemize}
    \item \textbf{Multiple Applications with complex dependencies}: A microservice system typically consists of multiple microservice-based applications with complex dependencies, while most studies only consider one application. With multiple applications, the system has different categories of user requirements, while the users only request one function in the system with one application.
    \item \textbf{Multiple Instances}: Multiple instances of each service can be deployed on different servers for load balancing in the microservice system, while some research only assign each service to one server.
\end{itemize}

It is essential to provide the best QoS by deploying service instances to servers considering the complex dependencies with support for multiple applications and instances in microservice systems. It should be noted that the server could be a physical one or a virtual machine. Moreover, the algorithm execution time should at least not exceed the container boot time. Considering the average boot time is 1.5s in~\cite{docker_performance}, although the average boot time depends on the hardware, the algorithm execution time should not exceed 5~10s in order to avoid degrading the user experience.
We call the problem of deploying services in such a system for QoS the Service Placement Problem in Microservice Systems (SPPMS). This paper selects the average response time as the main optimization goal because it plays a vital role in cloud and edge computing.

Our main contributions are as follows:

\begin{itemize}
    \item We define the Service Placement Problem in Microservice Systems (SPPMS) considering the complex service dependencies with support for multiple applications and instances. We formulate the SPPMS as a Fractional Polynomial Problem (FPP) without any restrictions on the topology of servers and services.
    \item We further prove a theorem that helps to rewrite the problem formulation to a Quadratic Sum-of-Ratios Fractional Problem (QSRFP), which can significantly reduce computational complexity compared to FPP. Two efficient greedy algorithms are designed to solve QSRFP based on the theorem, and the optimal algorithm based on the lower bound function and binary search is also introduced for reference.
    \item Experiments are conducted to evaluate the performance of the proposed algorithms in different situations against existing algorithms~\cite{GMCAPP,LDSPP}.
    The results show that our algorithms outperform existing approaches in both quality and speed. Discussions about the selection of proposed algorithms and methods to improve the performance are also presented.  
\end{itemize}

The rest of the paper is organized as follows. Section~\ref{sec:related_work} presents the related work on the deployment problems in cloud and edge computing. Section~\ref{sec:problem_formulation} introduces SPPMS and formulates it as an FPP. Section~\ref{sec:algorithm} shows the conversion from FPP to QSRFP and details the optimal algorithm and our proposed algorithms. Section~\ref{sec:experiments} presents the experiments and the analysis. Section~\ref{sec:conclusion} concludes the paper and outlines possible future work.

\section{Related Work}\label{sec:related_work}

There have already been studies on the service placement problem in cloud/edge/fog computing, and they mainly focus on two different scenarios: the service placement problem without and with dependencies between services. The former assumes the services in the system are independent of each other, and there is no request between them; the latter investigates the placement problem of the multi-component applications considering the dependencies between components.

For the service placement problem without dependencies, works~\cite{QoS_Fog_placement}, \cite{follow_me_at_edge}, \cite{autonomic}, and~\cite{fog_ga} propose algorithms that generate deployment schemes that provide better QoS or lower cost with resource constraints based on greedy or Markov approximation algorithms. The dependencies between services are always overlooked, and multiple instances for each service are not allowed. For example, the work~\cite{space_air_ground} modeled the mobility-aware joint service placement problem as a Mixed Integer Linear Programming (MILP) optimization, and the work~\cite{winning_at_starting} aimed to minimize the total delay in Mobile Edge Computing (MEC) by deploying each service to one server only. In contrast to this, the work~\cite{User_allocation} proposed a stochastic user allocation algorithm to assign users with different categories of requirements to different servers, which also means desired services should be deployed on servers. In work~\cite{deep_reinforcement_algo} and \cite{hani_2021_deep_reinforcement_fog}, deep reinforcement learning was used to seek an optimized deployment scheme on resource-constrained edges for better QoS, like lower service response time. Multiple instances for each service are allowed, but dependencies are overlooked. Lastly, the work~\cite{predictive_placement} solved the problem considering the prediction of user mobility based on a frame-based design for lower long-term time-average service delay based on Lyapunov optimization. In summary, even though these studies do well on some scenarios, the dependencies between services and multiple instances support should also be considered when solving the deployment problem in microservice systems.

In terms of the multi-component application placement problem, many studies focused on power saving, cost-saving, or better QoS, where the application consists of multiple components that can be deployed on different servers~\cite{GMCAPP,LDSPP,badep,Bahreini_2017_placement,optimal_application_deployment}. There, each component is only deployed on one server at the same time. The deployment schemes are mostly modeled as a matrix $ X $ where $ x_{i,j} \in \{0, 1\} $ stands for component $ i $ should be deployed on server $ j $ or not, and the problems are formulated as an Integer Linear Programming (ILP) problem or Mixed-ILP. The work~\cite{elitism_based} employed a Genetic Algorithm (GA) to solve the problem for minimizing multiple optimization goals, including service time, energy consumption, and service cost with resource constraints. Meanwhile, the work~\cite{wang_2017_edge_online_placement} treated both the application and the physical computing system as two graphs. Online approximation algorithms were proposed based on graph-to-graph placement, and the mapping from the application graph to the physical graph is treated as the deployment scheme. The work~\cite{stochastic_optimization} developed a stochastic optimization approach to maximize QoS in MEC with Markov Decision Processes. In summary, even though the dependencies between components are considered, the constraint that each component can only be deployed on one server at the same time makes these methods less effective in microservice systems, and the missing support of multiple applications overlooks the complex dependencies between services in the microservice systems.

The work~\cite{deng_2020_optimal} considered both the dependencies and multiple instances support in the service placement problem in resource-constrained distributed edges, while aiming to minimize running cost with QoS constraints. However, the proposed algorithm is based on the relaxed ILP problem within the branch and bound method, making it unsuitable for the online system.

In our opinion, it is essential to take the support for service dependencies and multiple instances into consideration in the service placement problem in microservice systems, while being computationally efficient. In a microservice system, most user requirements should be satisfied by a service set instead of one service, making the dependencies necessary at run-time. Multiple instances support for one service also plays an essential role in a microservice system for load balancing. Thus, those two factors must be considered for an online deployment algorithm in the placement problem.

\section{Problem Formulation}\label{sec:problem_formulation}

In this section, we formulate the Service Placement Problem in Microservice Systems (SPPMS), and we use the average response time as the optimization goal. A microservice system consists of a set of \textit{services}, and there are many instances deployed on \textit{servers} for every service. Each of the services has many functions which are exposed as APIs, and dependencies exist due to the requests between services.

\subsection{Microservice System Model}

This section details the definition of \textit{Service}, \textit{Dependency}, \textit{Function Chain}, \textit{Server}, and \textit{Deployment Scheme} in this work.

\begin{sloppypar}
\textbf{Definition 1 (Service)} The service set in the system is described as $ S $, and a service $ s_i $ is defined as a tuple ${ s_i = <\mathcal{F}_i, \mu_i, r^s_i>, s_i \in S }$, where:
\begin{itemize}
    \item $ \mathcal{F}_i = \{f_{i,1}, f_{i,2}, ..., f_{i,n}\} $ denotes a set of functions offered by $ s $. Each of them is described as a pair ${ f_{i,j} = <d^{in}_{i,j}, d^{out}_{i,j}> }$ where $ d^{in}_{i,j} $ and $ d^{out}_{i,j} $ stand for the input and output data size of $ f_{i, j} $.
    \item $ \mu_i $ stands for the processing capacity of $ s_i $ denoting the number of requests the service instance can process per unit time.
    \item $ r^s_i $ represents the set of computing resources used for $ s_i $ to achieve $ \mu_i $, such as CPU, RAM, and hard disk storage.
\end{itemize}
\end{sloppypar}

It should be noted that the APIs of each service $ s_i $ are treated as the function set $ \mathcal{F}_i $. Thus, every API is mapped to one unique $ f_{i, j} $, and the requests between APIs are treated as dependencies between services.

\textbf{Definition 2 (Dependency)} The dependencies between services are represented as the call graph $ DG = <F, E> $ of functions provided by each service. $ DG $ is a directed graph, where:
\begin{itemize}
    \item $ F $ denotes the set of functions provided by all the services.
    \item $ E $ stands for the edges between functions. The weight of each edge represents the Average Call Frequency Coefficient (ACFC) between two functions. An edge from $ f_{i,j} $ to $ f_{m,n} $ means $ f_{m,n} $ is called by $ f_{i,j} $, and ACFC($f_{i,j}$, $f_{m,n}$) shows how many times $ f_{m,n} $ should be called by $ f_{i,j} $ when $ f_{i,j} $ is called once.
\end{itemize}

Note that we do not consider cyclic dependencies in this work, as they are considered a bad practice in other architectures~\cite{Auto_Detection_Smells}, and they can be hard to maintain or reuse in isolation~\cite{Definition_MBS}.
Thus, $ DG $ is a directed acyclic graph. Furthermore, due to the complex logic, ACFC should be calculated according to the request history of the whole service system. With service requests in the if-else clause at the source code level, it is difficult to estimate the ACFC because the probabilities of executing the if-clause and else clauses cannot be predicted accurately.

\textbf{Definition 3 (Function Chain)} The function chain of $ f_{i,j} $ is defined as $ L_{i,j} = <f_{i,j}, ..., f_{m, n}> $. It is used to describe the requests between functions after the system receives a request to $ f_{i,j} $. $ L_{i,j}^m \in L_{i,j} $ stands for the \textit{m}-th function in $ L_{i,j} $.

When users or IoT devices send a request to any API in the system, a set of $ f_{i,j} $ are called, and the calling path between them is a subgraph of $ DG $. In SPPMS, the non-linear calling path can be converted to a function chain as shown in \figurename~\ref{fig:chain_conversion}. By adding a new virtual call between $ f_{3,2} $ and $ f_{2,3} $, which is in red, without input and output data, the calling path graph in the upper part of the figure is converted to the equivalent function chain in the lower part. Besides, the ACFC should also be set to zero which means $ f_{3,2} $ does not call $ f_{2,3} $ in real.

\begin{figure}[!t]
    \centering
    \includegraphics[width=\linewidth]{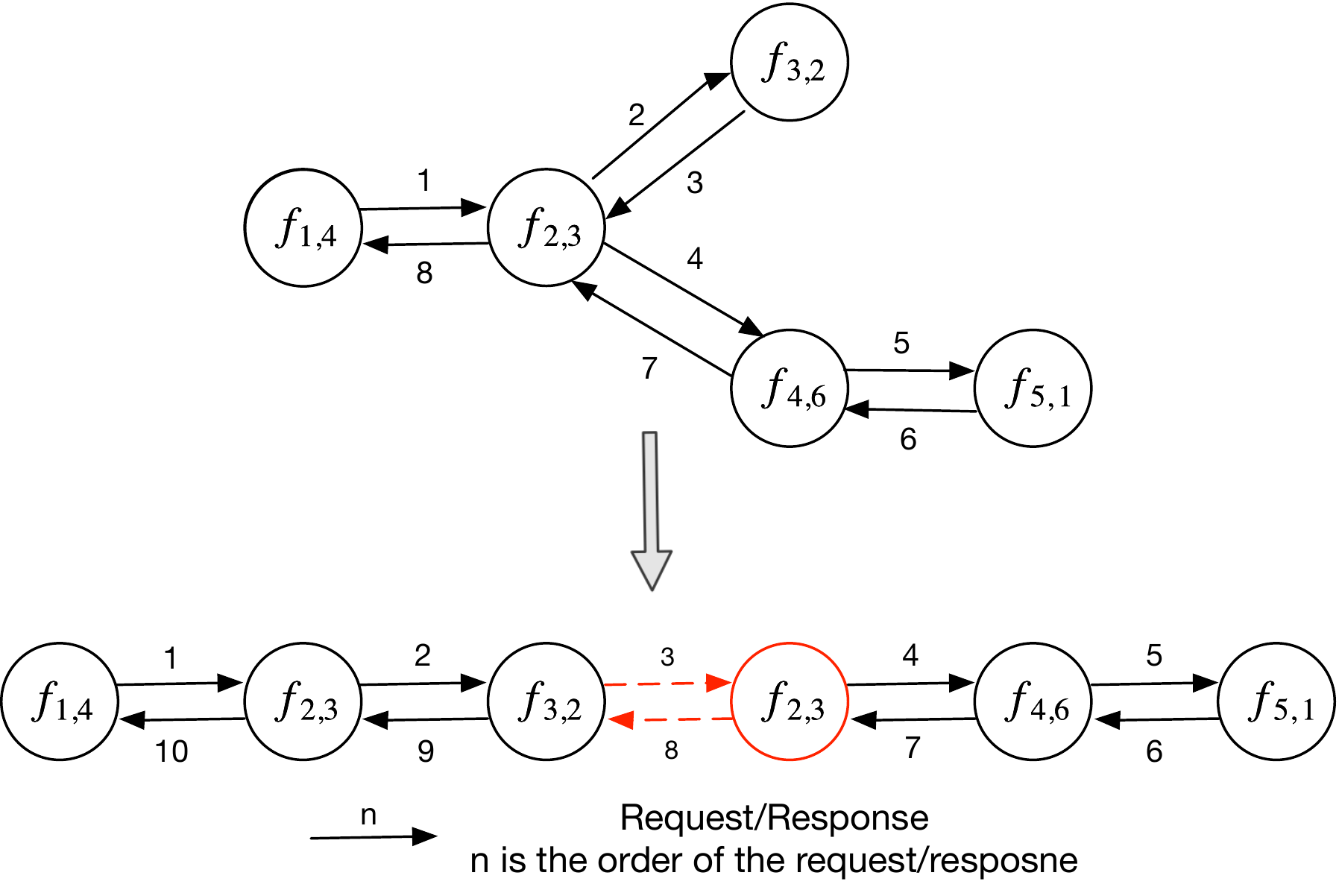}
    \caption{Conversion from function calling graph to chain}
    \label{fig:chain_conversion}
\end{figure}

As the non-linear calling path can be converted to a function chain as detailed above, it does not impose additional constraints on service dependencies.


\textbf{Definition 4 (Server)} The set of servers in the system is denoted as $ N $. For $ \forall n_i \in N $, $ n_i = <r_i^n> $, where $ r_i^n $ describes the computing resources of server $ n_i $. The delay and bandwidth between $ n_i $ and $ n_j $ are defined as $ d_{i,j} $ and $ b_{i,j} $, respectively.

Because not all nodes are fully connected to others, the total delay and the lowest bandwidth are used according to the routing path of any two nodes in the real system, which are determined by the routing rules.

\textbf{Definition 5 (Deployment Scheme)} The deployment scheme is described as a matrix $ X = \{x_{i,j} | x_{i,j} \in \mathbb{N}, 1 \leq i \leq |N|, 1 \leq j \leq |S|\} $, where $ x_{i, j} $ stands for the instance count of service $ s_i $ on server $ n_j $.

Table~\ref{tab:notation} summarizes all the symbols and notations in this paper for convenience.

\begin{table}
\caption{Notation}
\label{tab:notation}
\centering
\begin{tabular}{ll} 
\hline
Notation & Description  \\ 
\hline
    $ s_i $                             &   Microservice $ s_i, s_i \in S $                 \\
    $ \mathcal{F}_i $                   &   Function set of $ s_i $                         \\
    $ f_{i, j} $                        &   One of the function of service $ s_i $, $ f_{i,j} \in \mathcal{F}_i $          \\
    $ d^{in}_{i,j},\ d^{out}_{i,j} $    &   Input and output data size of $ f_{i, j} $      \\
    $ \mu_i $                           &   Processing capacity of $ s_i $                  \\
    $ r^s_i $                           &   Computing resource set that used by $ s_i $     \\
    $ F $                               &   All the function set provided by all services \\
    $ DG $                              &   Dependency graph of the service system          \\
    ACFC($f_{i,j}$, $f_{m,n}$)          &   \begin{tabular}[c]{@{}l@{}}Average call frequency coefficient between\\$ f_{i,j} $ and $ f_{m,n} $\end{tabular}  \\
    $ L_{i,j} $                         &   Function chain of $ f_{i,j} $                   \\
    $ L_{i,j}^m $                       &   The m-th function of $ L_{i,j} $                \\
    $ n_i $                             &   Server node $ n_i, n_i \in N $                  \\
    $ d_{i,j},\ b_{i,j} $               &   Delay and bandwidth between $ n_i $ and $ n_j $        \\
    $ X $                               &   Deployment scheme                               \\
    $ x_{i,j} $                         &   Instance count of $ s_i $ on $ n_j $            \\
    $ h_{i,j} $                         &   \begin{tabular}[c]{@{}l@{}}A response server path for $ L_{i,j} $, $ h_{i,j} \in H_{i,j} $ \end{tabular}    \\
    $ \lambda_{i,j}^k $                 &   Request rate of $ f_{i,j} $ on $ n_k $          \\
    $ T $                               &   The average response time of the system         \\
    $ C_{max} $                         &   Maximum deployment cost                         \\
    $ \gamma_i^u $                      &   Total request rate of $ s_i $ from users        \\
    $ \gamma_i^s $                      &   Total request rate of $ s_i $ from services     \\
\hline
\end{tabular}
\end{table}

\subsection{Problem Definition}

\begin{figure*}[!ht]
    \centering
    \includegraphics[width=0.9\linewidth]{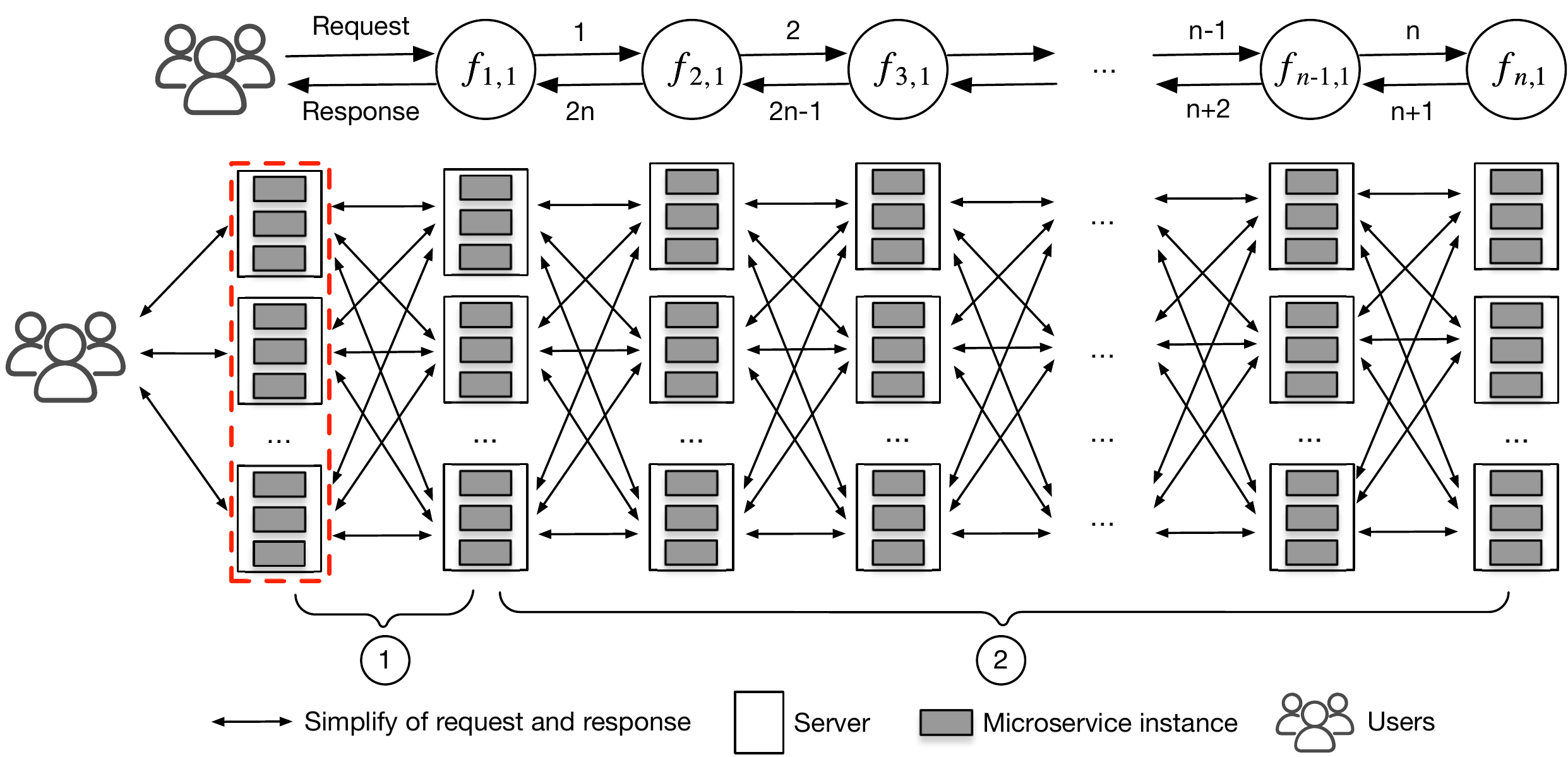}
    \caption{An overview of the request routing for any $ f_{i,j} $}
    \label{fig:routing_overview}
\end{figure*}

This section provides the formulation of the Service Placement Problem in Microservice Systems (SPPMS) with the optimization goal and constraints.

\subsubsection{Average Response Time}

When the microservice system responds to a request to $ f_{i,j} $, it must select one service instance for each function in $ L_{i,j} $ due to the dependencies between services. $ H_{i,j} $ denotes all the possible response server paths, and $ h_{i,j} \in H_{i,j} $ stands for one selected server path for processing $ L_{i,j} $. $ h^m_{i,j} \in h_{i,j} $ is used to denote the server where the service instance that receives the request to $ L^m_{i,j} $ locates. Also, $ |h_{i,j}| = |L_{i,j}| $.

There are many load balancing algorithms used in microservices system with multiple service instances support, such as Round Robin, Weighted, and Random. Note that we consider the commonly used Round Robin routing in our SPPMS formulation; an adaptation to many other load balancing algorithms is straightforward.

The primary process of the request routing in a microservice system is shown in \figurename~\ref{fig:routing_overview}. The transmission between users and servers is not considered as it depends on many factors outside of the system, and we also assume the requests from the users are sent to the nearest servers first, as the servers in the red rectangle shown in \figurename~\ref{fig:routing_overview}. It should be noted that the \figurename~\ref{fig:routing_overview} is based on synchronous requests. However, it is also suitable for asynchronous requests by setting the output data size of each function to 0.

Under the Round Robin routing algorithm, all the service instances of service $ s_i $ in the system have the same probability of receiving the request about the function $ f_{i,j} $ when there is a sufficiently large number of requests. In this paper, we assume that the number of requests is large enough and the probability is the same. Thus, the probability of processing the request about $ f_{i,j} $ on $ n_{k} $ is:

\begin{equation}
    Prob(f_{i,j}, n_k) = \frac{x_{k,i}}{\sum_{m=1}^{|N|} x_{m,i}}
\end{equation}

For any given selected response service path $ h_{i,j} $, the probability of selecting it from $ H_{i,j} $ is:

\begin{equation}\label{eqa:prob_path}
    Prob(h_{i,j}) = \prod_{k=1}^{|L_{i,j}|} Prob(L^k_{i,j}, h^k_{i,j})
\end{equation}

The probability of the request of $ f_{i,j} $ from server $ n_k $ is:

\begin{equation}
    Prob(\lambda_{i,j}^k) = \frac{\lambda_{i,j}^k}{\sum_{m=1}^{|N|} \lambda_{i,j}^m}
\end{equation}

When calculating the average response time, the request processing time after receiving the request is not included to reduce the impact of each microservice implementation. Only the delay and data transmission time are included in this paper. The response time of the request of $ f_{i,j} $ from users near server $ n_k $, which is denoted as $ t_{h_{i,j},k} $, consists of two parts (see \ding{172} and \ding{173} in \figurename~\ref{fig:routing_overview}). The first part is the response time between the request received server and the first of the selected response server path, and the second part is the response time for processing the function chain. $ t_{h_{i,j},k} $ is formulated as:

\begin{equation}\label{eqa:t_p_k}
\begin{aligned}
    t_{h_{i,j},k} = \sum_{l_{v,w}^m, m=2}^{|L|} \left(\frac{d^{in}_{v,w} + d^{out}_{v,w}}{b_{p^{m-1}_{i,j}, p^m_{i,j}}} + d_{p^{m-1}_{i,j}, p^m_{i,j}}\right) \\ + \left(\frac{d^{in}_{i,j} + d^{out}_{i,j}}{b_{k, p^1_{i,j}}} + d_{k, p^1_{i,j}}\right)
\end{aligned}
\end{equation}

Thus, the average response time of $ f_{i,j} $ in the whole system is described as $ T_{i,j} $:

\begin{equation}\label{eqa:T_i_j}
    T_{i,j} = \sum_{k=1}^{|N|} \sum_{h_{i,j}\in H_{i,j}}  Prob(\lambda_{i,j}^k) \times Prob(h_{i,j}) \times t_{h_{i,j},k}
\end{equation}

Consequently, the system's average response time is defined as $ T $ when giving the deployment scheme $ X $:

\begin{equation} \label{eqa:Tx}
    T(X) = \sum_{f_{i,j} \in F} \frac{\sum_{m=1}^{|N|} \lambda_{i,j}^m}{ \sum_{f_{v,w} \in F} \sum_{m=1}^{|N|} \lambda_{v,w}^m } \times T_{i,j}
\end{equation}

\subsubsection{Constraints}

There are three types of constraints in SPPMS: the computing resource constraint of servers, the deployment cost constraint, and the service capability constraint.

For the computing resources, the used resources by service instances should not exceed the resources provided by each server:

\begin{equation} \label{eqa:c1}
    \sum_{i=1}^{|S|} x_{j,i} \times r^s_i \leq r^n_j, \forall n_j \in N
\end{equation}

The deployment cost should also not exceed the given maximum cost $ C_{max} $, and the monetary cost is considered as the deployment cost here. Because most of the cloud providers support to charge based on resource usage, the deployment constraint is defined as follows, where $ c $ is the cost of one unit used resource:

\begin{equation} \label{eqa:c2}
    c \times \sum_{i=1}^{|S|} \sum_{j=1}^{|N|} r^s_{i} \times x_{j,i} \leq C_{max}
\end{equation}

The last constraint is the service capability constraint, which makes sure that there are sufficient service instances to process the requests from the users and other services. The total desired capability of $ s_{i} $ contains desired capability from the users denoted as $ \gamma^u_{i} $ and from the services denoted as $ \gamma^s_{i} $. For the user part, it can be calculated by:

\begin{equation}
    \gamma^u_{i} = \sum_{f_{j,k} \in \mathcal{F}_{i}} \sum_{m=1}^{|N|} \lambda_{j,k}^m
\end{equation}

The service part $ \gamma^s_{i} $ is raised by the requests between services due to dependencies, and it can be calculated directly with $ L_{i,j} $ and $ DG $. For example, assuming $ L_{1,1} = <f_{1,1}, f_{2,1}, f_{3,1}> $, $ ACDC(f_{1,1}, f_{2,1}) = 2 $, $ ACDC(f_{2,1}, f_{3,1}) = 1.5 $ and $ \lambda_{1,1} = 5 $, then $ \gamma^s_{2} = \lambda_{1,1} \times ACDC(f_{1,1}, f_{2,1}) = 10 $, and $ \gamma^s_{3} = \lambda_{1,1} \times ACDC(f_{1,1}, f_{2,1}) \times ACDC(f_{2,1}, f_{3,1}) = 15 $.

The service capability constraint is formulated as:

\begin{equation} \label{eqa:c3}
    \mu_i \times \sum_{j=1}^{|N|} x_{j, i} \geq \gamma^u_{i} + \gamma^s_{i} , \forall s_i \in S
\end{equation}

\subsubsection{Optimization Problem Definition}

By combining Equations~\ref{eqa:Tx}, \ref{eqa:c1}, \ref{eqa:c2}, and \ref{eqa:c3}, we can define the optimization problem.

Given the service set $ S $, the dependency graph $ DG $, the server set $ N $, user requirements about each function on every server $ \lambda_{i,j}^k, \forall f_{i,j} \in F, \forall n_k \in N $, and maximum cost $ C_{max} $, the goal is to find a deployment scheme $ X $ that satisfies:

\begin{subequations}
\begin{gather}
    min \ T(X), \ X \in \mathbb{N}_{|N| \times |S|} \\
s.t.\left\{
\begin{array}{lr}
    \sum_{i=1}^{|S|} x_{j,i} \times r^s_i \leq r^n_j, & \forall n_j \in N   \\
    c \times \sum_{i=1}^{|S|} \sum_{j=1}^{|N|} r^s_{i} \times x_{j,i} \leq C_{max}   \\
    \mu_i \times \sum_{j=1}^{|N|} x_{j, i} \geq \gamma^u_{i} + \gamma^s_{i}, & \forall s_i \in S
\end{array}
\right.
\end{gather}
\end{subequations}

Because both the denominator and the numerator of $ T(X) $ are multivariate polynomials~\cite{solving_fpp}, the problem is a Fractional Polynomial Problem (FPP).

\section{Algorithms}\label{sec:algorithm}

In the following, we show how to convert our FPP to a Quadratic Sum-of-Ratios Fractional Problem (QSRFP), which can significantly reduce the computational complexity comparing to FPP. Let us assume that the average length of function chain is $ \hat{l} $. Then, to evaluate the average response time of any function, FPP needs to iterate over all the possible response path, whose size is $ |N|^{\hat{l}} $, while QSRFP does not need to. QSRFP is the problem whose constraint functions and the denominator and numerator of the optimization goal are all quadratic functions, which are possibly not convex~\cite{solving_qsrfp}. Two efficient greedy-based algorithms are detailed to solve QSRFP, and the optimal algorithm for solving QSRFP is also presented for reference. In general, a Quadratic Sum-of-Ratios Fractional Programs (QSRFP) problem has the following structure~\cite{solving_qsrfp}:
\begin{subequations} \label{eqa:fpp}
\begin{gather}
{\rm min} \ H_0(y) = \sum_{i=1}^p \frac{f_i(y)}{g_i(y)}  \\
s.t.\left\{
\begin{array}{lr}
    H_m(y) \leq 0, m = 1, ..., M,   \\
    y \in Y^0 = \{y \in R^n : \underline{y}^0 \leq y \leq \overline{y}^0\} \subset R^n
\end{array}
\right.
\end{gather}
\end{subequations}

where $ p \leq 2 $; $ f_i(y) $, $ g_i(y) $, and $ H_m(y) $ are all quadratic functions, i.e. $ f_i(y) = \sum_{j=1}^n \sum_{k=1}^n \delta^i_{jk} y_i y_k + \sum_{k=1}^n c_k^i y_k + \overline{\delta}_i $.

\subsection{Converting FPP to QSRFP}\label{subsec:converting}

Although there are already some optimization approaches for the FPP~\cite{solving_fpp,optimization_fpp,global_minimization_fpp}, it is difficult to apply them as online algorithms due to the high computational complexity of $ T(X) $. 
Assuming the average length of function chain is $ \hat{l} $, there are $ |N|^{\hat{l}} $ possible response server paths for each function, and Equation~\ref{eqa:prob_path} must be calculated for $ |F| \times |N|^{\hat{l}} $ times, which is unaffordable for an online algorithm. Thus, we cannot consider FPP as defined above. To simplify the problem, we prove the following theorem:

\begin{theorem}\label{theorem}
Given ${ <f_1, f_2, ..., f_{k-1}, f_{k}, ..., f_{n}> }$ as the target function chain, let $ T_{i \rightarrow j} $ denote the average response time from $ f_i $ to $ f_j $. Then we have: 

\begin{equation}
    T_{1 \rightarrow n} = \sum_{i=1}^{n-1}T_{i \rightarrow i+1}
\end{equation}
\end{theorem}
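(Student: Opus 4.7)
The plan is to reduce Theorem~\ref{theorem} to an application of linearity of expectation, exploiting the two independence-like structures already built into the model. First, inspecting Equation~\ref{eqa:t_p_k}, the per-path response time $t_{h_{i,j},k}$ is additive over the hops of the response server path: the contribution of each consecutive pair $(h^{m-1}_{i,j}, h^m_{i,j})$ depends only on those two servers and on the data sizes of $L^m_{i,j}$. Second, under the Round Robin routing assumption the path probability factorises as $Prob(h_{i,j}) = \prod_m Prob(L^m_{i,j}, h^m_{i,j})$ by Equation~\ref{eqa:prob_path}, so the random server choices at successive positions are independent.

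My plan is to combine these two facts. Substituting the additive expression for $t_{h,k}$ into Equation~\ref{eqa:T_i_j} and exchanging the sum over hop index $m$ with the sum over paths $h$ yields, for each fixed $m$, a summand in which only the factors $Prob(L^{m-1},h^{m-1})$ and $Prob(L^m,h^m)$ of $Prob(h)$ actually depend on variables appearing in the hop cost. The remaining factors $Prob(L^k,h^k)$ for $k \notin \{m-1,m\}$ marginalise to one, so what survives is precisely the expected cost of a single hop with the correct marginal distribution. By the definition of $T_{L^{m-1}\to L^m}$ applied to the length-two subchain $\langle L^{m-1}, L^m\rangle$, this surviving quantity equals $T_{m-1 \to m}$, and summing over $m$ gives the claimed telescoping identity.

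A small bookkeeping step I would attend to separately concerns the very first term in Equation~\ref{eqa:t_p_k}, which describes the hop from the request-receiving server $n_k$ (weighted by $Prob(\lambda_{i,j}^k)$) to the first selected server $h^1_{i,j}$. To make the decomposition clean, I would treat the user-request server as a ``position zero'' in the chain and absorb this term into $T_{1 \to 2}$, consistently with how the outer expectation over $k$ is taken in Equation~\ref{eqa:T_i_j}. Under this convention the boundary of the telescoping sum matches the boundary of the original definition.

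The main obstacle I anticipate is not the algebra but the notational setup: the statement refers to abstract quantities $T_{i\to j}$, whereas Equations~\ref{eqa:t_p_k}--\ref{eqa:T_i_j} are written for a specific initiator function together with a source server. The real work in the proof is therefore pinning down a definition of $T_{i\to j}$ that (i) marginalises over the upstream request-source distribution on the left end and (ii) marginalises over the downstream server choices on the right end, so that the hop-wise decomposition on the left-hand side lines up termwise with the pair quantities on the right-hand side. Once this definition is fixed, linearity of expectation together with the product form of $Prob(h)$ delivers the identity essentially for free.
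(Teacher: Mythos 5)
Your proposal is correct and rests on exactly the same two ingredients as the paper's proof: additivity of the path response time over hops and the product form of the path probability, which lets the irrelevant positions marginalise to one. The only difference is organisational --- the paper splits the chain recursively at an arbitrary index $k$ and telescopes, whereas you decompose into single hops in one pass via linearity of expectation --- and your boundary convention of treating the request-receiving server as position zero matches the paper's remark that the users can be treated as $L_{i,j}^0$.
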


\begin{proof}

 Define $ H_{i \rightarrow j} $ as all the possible server paths from $ f_{i} $ to $ f_{j} $ and $ h_{i \rightarrow j} $ as a server path $ \forall h_{i \rightarrow j} \in H_{i \rightarrow j} $, $ P_{h_{i \rightarrow j}} $ denotes the probability of $ h_{i \rightarrow j} $, and $ t_{h_{i \rightarrow j}} $ represents the response time of $ h_{i \rightarrow j} $. Then $ \forall k \in [1, n] $, we have :
 
 \begin{equation}\label{eqa:proof_T}
     T_{1 \rightarrow n} = \sum_{h_{1 \rightarrow n} \in H_{1 \rightarrow n}} P_{h_{1 \rightarrow n}}t_{h_{1 \rightarrow n}}
 \end{equation}
 
  \begin{equation}\label{eqa:proof_P}
     \sum_{h_{1 \rightarrow n} \in H_{1 \rightarrow n}} P_{h_{1 \rightarrow n}} = \sum_{h_{1 \rightarrow k} \in H_{1 \rightarrow k}} \sum_{h_{k \rightarrow n} \in H_{k \rightarrow n}} P_{h_{1 \rightarrow k}} P_{h_{k \rightarrow n}}
 \end{equation}
 
 \begin{equation}\label{eqa:proof_t}
     t_{h_{1 \rightarrow n}} = t_{h_{1 \rightarrow k}} + t_{h_{k \rightarrow n}}
 \end{equation}
 
 After applying Equation \ref{eqa:proof_P} and \ref{eqa:proof_t} to \ref{eqa:proof_T}, $ \forall k \in [1, n] $:

\begin{equation}\label{eqa:proof}
    \begin{aligned}
T_{1 \rightarrow n} & = \sum_{h_{1 \rightarrow n} \in H_{1 \rightarrow n}} P_{h_{1 \rightarrow n}}t_{h_{1 \rightarrow n}} \\
& = \sum_{h_{1 \rightarrow n} \in H_{1 \rightarrow n}} P_{h_{1 \rightarrow n}}t_{h_{1 \rightarrow k}} + \sum_{h_{1 \rightarrow n} \in H_{1 \rightarrow n}} P_{h_{1 \rightarrow n}}t_{h_{k \rightarrow n}} \\
& = \sum_{h_{k \rightarrow n} \in H_{k \rightarrow n}} \left(\sum_{h_{1 \rightarrow k} \in H_{1 \rightarrow k}} P_{h_{1 \rightarrow k}} t_{h_{1 \rightarrow k}}\right) P_{h_{k \rightarrow n}} \\
& \qquad + \sum_{h_{1 \rightarrow k} \in H_{1 \rightarrow k}} \left(\sum_{h_{k \rightarrow n} \in H_{k \rightarrow n}} P_{h_{k \rightarrow n}} t_{h_{k \rightarrow n}}\right) P_{h_{1 \rightarrow k}} \\
& = \sum_{h_{k \rightarrow n} \in H_{k \rightarrow n}} P_{h_{k \rightarrow n}} T_{1 \rightarrow k} + \sum_{h_{1 \rightarrow k} \in H_{1 \rightarrow k}} P_{h_{1 \rightarrow k}} T_{k \rightarrow n} \\
& = T_{1 \rightarrow k} + T_{k \rightarrow n}
\end{aligned}
\end{equation}

After applying Equation \ref{eqa:proof} repeatedly, we have:

\begin{equation}
\begin{aligned}
    T_{1 \rightarrow n} & = T_{1 \rightarrow 2} + T_{2 \rightarrow n}  \\
    & = T_{1 \rightarrow 2} + T_{2 \rightarrow 3} + T_{3 \rightarrow n} \\
    & = ... \\
    & =T_{1 \rightarrow 2} + T_{2 \rightarrow 3} + ... + T_{n-1 \rightarrow n} \\
    &= \sum_{i=1}^{n-1}T_{i \rightarrow i+1}
\end{aligned}
\end{equation}

\end{proof}

To make the formula more concise, the users can be treated as $ L_{i,j}^0 $, and $ L_{i,j} $ is abbreviated as $ L $. Then, using Theorem~\ref{theorem},  $ T_{i,j} $ in Equation~\ref{eqa:T_i_j} can be represented as:

\begin{equation}\label{eqa:q_t_i_j}
\begin{aligned}
    T_{i,j} = \sum_{k=0}^{|L| - 1} \underbrace{\sum_{v}^{|N|} \sum_{w}^{|N|} (\overbrace{\frac{x_{v,L^k}}{\sum_{m}^{|N|} x_{m,L^k}} \times \frac{x_{w,L^{k+1}}}{\sum_{m}^{|N|} x_{w,L^{k+1}}}}^{prob \ of \ L^k \ on \ n_j \ and \ L^{k+1} \ on \ n_w} 
    \times t_{v,w,k+1})}_{avgTime\ between\ L^k\ and\ L^{k+1}\ as\ a\ quadratic\ fraction}
\end{aligned}
\end{equation}

where $ t_{v,w,k+1} = \frac{d^{in}_{L^{k+1}} + d^{out}_{L^{k+1}}}{b_{v,w}} + d_{v,w} $ denotes the response time between $ L^{k} $ on $ n_w $ and $ L^{k+1} $ on $ n_w $. By defining $ \hat{\lambda}_{i,j} =  \frac{\sum_{m=1}^{|N|} \lambda_{i,j}^m}{ \sum_{f_{v,w} \in F} \sum_{m=1}^{|N|} \lambda_{v,w}^m } $, $ T(X) $ can be described as

\begin{equation}\label{eqa:TX_qsrfp}
    T(X) = \sum_{f_{i,j} \in F} \hat{\lambda}_{i,j} \times T_{i,j}
\end{equation}

It should be noted that $ T_{i,j} $ is a quadratic sum-of-rations fraction, and the same as $ T(X) $ because $ \hat{\lambda}_{i,j} $ is a constant. Thus, we have converted the FPP to a QSRFP. Compared with the FPP in Equation~\ref{eqa:fpp}, there is no need to iterate over all possible response paths for each function chain, which reduces the computational cost significantly.

\subsection{Greedy-based Algorithms}

There are two greedy-based algorithms\footnote{https://github.com/HIT-ICES/AlgoDeployment} we proposed: B-QSRFP in Algorithm~\ref{algo:bfs_placement} and D-QSRFP in Algorithm~\ref{algo:dfs_placement}. They depend on two sub-algorithms~\ref{algo:best_server} and~\ref{algo:deploy} to find the best server for each service and deploy service instances.
The greedy-based algorithms are inspired by Equation~\ref{eqa:q_t_i_j}: based on it, the best placement of each service $ s_i $ is only related to the services that call $ s_i $ and the services called by $ s_i $, and the response server paths $ H $ has no influence on $ T_{i,j} $. Algorithm~\ref{algo:best_server} uses this and selects the best server for new instances of $ s_j $ based on Equation~\ref{eqa:q_t_i_j}.

This algorithm only considers the servers that have enough computing resources for at least one instance of $ s_j $ as shown in lines 3--4. For each server $ n $, the algorithm evaluates the average response time based on Equation~\ref{eqa:T_i_j} when a new instance of $ s_j $ is deployed on $ n $ as shown in lines 5--11, where $ DG.{\rm pred}(s_j) $ and $ DG.{\rm succ}(s_j) $ denotes the predecessors and successors of $ s_j $ in $ DG $. The response time between $ s_j $ and $ s_p $ equals 0 if there is no instance of $ s_p $ during the evaluation. Lines 12--14 selects the server with the smallest average response time as the best server for $ s_j $.

When deploying an instance of $ s_i $, the best servers for the predecessors and successors are changed according to Equation~\ref{eqa:q_t_i_j}, which means the algorithm needs to replace the instances of the predecessors and successors, as detailed in Algorithm~\ref{algo:deploy}.

\begin{algorithm}[H]
\caption{bestServer($s_j $, $ X $, $ N $, $ DG $) Algorithm}\label{algo:best_server}
\textbf{Input:} $ s_j $: service wanted to deploy \\
\hspace*{6ex}$ X $: current deployment scheme \\
\hspace*{6ex}$ N $: server list \\
\hspace*{6ex}$ DG $: dependency graph \\
\textbf{Output:} best node to deploy $ s_j $
\begin{algorithmic}[1]
\State $ s \gets None $
\State $ c \gets +\infty $
\For{$ n $ in $ N $}
    \If{ $ n $ has enough resource for $ s_j $ }
        \State $ currC \gets 0 $
        \For{ $ s_p $ in $ DG.{\rm pred}(s_j) $ }
            \State $ currC \gets currC + {\rm avgTime}(s_j, s_p, n, X) $ \Comment{avgTime: The avgTime between $ s_j $ and $ s_p $ based on Equation~\ref{eqa:T_i_j} after deploying an instance of $s_j$ on $n$.}
        \EndFor
        \For{ $ s_s $ in $ DG.{\rm succ}(s_j) $ }
            \State $ currC \gets currC + {\rm avgTime}(s_j, s_s, n, X) $
        \EndFor
        
        \If{ $ currC < c $ }
            \State $ s,\ c \gets n,\ currC $
        \EndIf
    \EndIf
\EndFor
\State \Return $ s $
\end{algorithmic}
\end{algorithm}

\begin{algorithm}[H]
\caption{DeploySpread($s_i$, $k$, $ X $, $ N $, $ DG $) Algorithm}\label{algo:deploy}
\textbf{Input:} $ s_j $: service wanted to deploy \\
\hspace*{6ex}$ k $: how many instances wanted to deploy \\
\hspace*{6ex}$ X $: current deployment scheme \\
\hspace*{6ex}$ N $: server list \\
\hspace*{6ex}$ DG $: dependency graph
\begin{algorithmic}[1]
\While{ $ k > 0 $ }
    \State $ n \gets {\rm bestServer}(s_j, X, N, DG) $
    \State $ c \gets {\rm min}(\lfloor \frac{{\rm leftRes}(n)}{r_j^s} \rfloor, k ) $
    \State $ {\rm deploy}(s_i, n, c) $
    \State $ k \gets k - c $
    
    \State $ S_r \gets DG.{\rm pred}(s_j) \cup DG.{\rm succ}(s_j) $
    \State $ S_p \gets \{ s_j \} $
    \State $ X_{copy} \gets X $
    \While{ len$ (S_r) \ne 0 $ }
        \State $ s_r \gets S_r.{\rm pop}() $
        \State $ k_r \gets {\rm instNum}(s_r, X) $
        \State delete all the instances of $ s_r $ from X
        \While{ $ {\rm instNum}(s_r, X) < k_r $ }
            \State $ n_r \gets {\rm bestServer}(s_r, X, N, DG) $
            \State $ c_r \gets {\rm min}(\lfloor \frac{{\rm leftRes}(n_r)}{r_r^s} \rfloor, k_r - {\rm instNum}(s_r, X) ) $
        \EndWhile
        \If{ $ X \ne X_{copy} $ }
            \State $ S_r \gets S_r \cup DG.{\rm pred}(s_r) \cup DG.{\rm succ}(s_r) \setminus S_p $
            \State $ S_p \gets S_p \cup \{ s_r \} $
        \EndIf
    \EndWhile
\EndWhile
\end{algorithmic}
\end{algorithm}

\begin{algorithm}[h]
\caption{BFS Placement Algorithm: B-QSRFP}\label{algo:bfs_placement}
\textbf{Input:} $ \Gamma^s $:  $ \{ \gamma_{i}^s | \forall s_{i} \in S \} $  \\
\hspace*{6ex}$ \Gamma^u $:  $ \{ \gamma_{i}^u | \forall s_{i} \in S \} $  \\
\hspace*{6ex}$ N $: server list \\
\hspace*{6ex}$ DG $: dependency graph \\
\textbf{Output:} Deployment Scheme
\begin{algorithmic}[1]
\State $ X \gets \mathbf{0} $
\While{ not $ DG.{\rm empty}() $ }
    \State $ S_c \gets \{ s_i | s_i \in S, {\rm len}(DG.{\rm pred}(s_{i})) = 0 \} $
    \State $ s_c \gets {\rm argmin}_{s_i \in S_c} \frac{\mu_i}{r_i^s} $
    \State $ {\rm DeploySpread}(s_i, \lceil \frac{\gamma_i^s + \gamma_i^u}{\mu_i} \rceil, X, N, DG) $
    \State $ DG.{\rm remove}(s_i) $
\EndWhile
\State \Return $ X $
\end{algorithmic}
\end{algorithm}

Algorithm~\ref{algo:deploy} is responsible for deploying $ k $ instances of $ s_i $ in the system considering the dependencies between services. After finding the best server for $ s_i $ with Algorithm~\ref{algo:best_server}, instances are deployed on the server as many as needed, as shown in lines 2--4. Lines 6--21 re-deploy instances of predecessors and successors. The instances of them are deleted and re-deployed based on Algorithm~\ref{algo:best_server} again in lines 12--16. The re-deployment is continued for the services whose deployment is changed after re-deployment in lines 17--20. The re-deployment is stopped when all the affected services are processed.

\begin{algorithm}[H]
\caption{DFS Placement Algorithm: D-QSRFP}\label{algo:dfs_placement}
\textbf{Input:} $ L_u $: $ \{ L(f_{i,j}) | \sum_m^{|N|} \lambda_{i,j}^m > 0 \} $ \\
\hspace*{6ex}$ \Lambda $:  $ \{ \lambda_{i,j}^k | \forall f_{i,j} \in F, \forall k \in N \} $  \\
\hspace*{6ex}$ N $: server list \\
\hspace*{6ex}$ DG $: dependency graph \\
\textbf{Output:} Deployment Scheme
\begin{algorithmic}[1]
\State $ X \gets \mathbf{0} $
\State $ \lambda_{solved} \gets {\rm a\ dict\ whose\ values\ are\ 0}\ \forall s_i \in S $
\While{ len$(L_u) \ne 0 $ }
    \State $ l_u \gets {\rm maximum\ data\ transmission\ chain\ from\ L_u} $
    \State $ coe \gets 1 $
    \State $ f_{prev} \gets None $
    \For{ $ f_{i,j} $ in $ l_u $ }
        \If{ $ f_{prev} \ne None $ }
            \State $ coe \gets coe \times {\rm ACDC}(f_{prev}, f_{i,j}) $
        \EndIf
        \State $ f_{prev} = f_{i,j} $
        \State $ \lambda_c = coe \times \sum_m^{|N|}\lambda_{i,j}^m $
        \State $ \lambda_d = \lambda_{solved}[s_i] + \lambda_c - {\rm instNum}(s_i) \times \mu_i $
        \If{ $ \lambda_d > 0 $ }
            \State $ {\rm DeploySpread}(s_i, \lceil \frac{\lambda_d}{\mu_i} \rceil, X, N, DG) $
            \State $ \lambda_{solved}[s_i] \gets \lambda_{solved}[s_i] + \lambda_c $
        \EndIf
    \EndFor
    \State $ L_u \gets L_u \setminus \{ l_u \} $
\EndWhile
\State \Return $ X $
\end{algorithmic}
\end{algorithm}

Based on Algorithm~\ref{algo:deploy}, we propose two greedy algorithms based on depth-first search (DFS) and breadth-first search (BFS); see Algorithms~\ref{algo:bfs_placement} and~\ref{algo:dfs_placement}. The difference between them is the order of the service deployment. For BFS Placement Algorithm~\ref{algo:bfs_placement}, the service that has the best service ability with minimum computing resources and without predecessors is selected as shown in lines 3--4. Line 5 deploys the minimum number of instances which can satisfy the requests from users and services. After deploying $ s_i $, it is removed from $ DG $ in line 6, and the algorithm stops when all services are deployed as line 2.

The DFS Placement Algorithm~\ref{algo:dfs_placement}, on the other hand, deploys the function chain with maximum data transmission size first as line 4, and the services are deployed following the order of the function chain $ l_u $. Then, the algorithm calculates the request rate $ \lambda_c $ of each service in $ l_u $ and deploys the minimum number of instances to satisfy the desired ability $ \lambda_d $ in lines 8--17. Finally, it ends when all $ f_{i,j} $ that the users request are processed.

\subsection{The Optimal Algorithm}

This section briefly introduces the optimal algorithm for solving QSRFP, which mainly comes from~\cite{solving_qsrfp}.
It serves as a reference only, because its high computational cost prevents us from using it as an online algorithm.

The basic idea of the solution in~\cite{solving_qsrfp} is constructing a parametric relaxation linear programming problem (PRLP) of QSRFP in $ Y $:

\begin{subequations}
\begin{gather}
{\rm min} \ H_0^L(y) = \sum_{i=1}^p \frac{f_i^L(y)}{\overline{g}_i^U}  \\
s.t.\left\{
\begin{array}{lr}
    H^L_m(y) \leq 0, m = 1, ..., M,   \\
    y \in Y^0 = \{y \in R^n : \underline{y}^0 \leq y \leq \overline{y}^0\} \subset R^n
\end{array}
\right.
\end{gather}
\end{subequations}

where $ H_0^L \leq H_0 $, $ H_m^L \leq H_m $, $ f_i^L \leq f_i $, and $ \overline{g_i}^U = {\rm max}_{y \in Y} g_i^U $. For more details about how constructing PRLP, please refer to Section~2 in~\cite{solving_qsrfp}. With Theorems~1 and~2 in~\cite{solving_qsrfp}, $ H_0^L \rightarrow H_0(y) $ as $ ||\overline{y} - \underline{y}|| \rightarrow 0 $.

The branching technique is also used for searching the optimal solution by iteratively subdividing the rectangle $ Y^k $ into two sub-rectangles. For any selected sub-rectangle $ Y^k = [\underline{y}, \overline{y}] \subseteq Y^0$, The sub-rectangles $ Y^{k,1} $ and $ Y^{k,2} $ are generated by dividing $ y_\theta $ into $ \left[\underline{y}_\theta, \frac{\underline{y}_\theta + \overline{y}_\theta}{2}\right]$ and $ [\frac{\underline{y}_\theta + \overline{y}_\theta}{2}, \overline{y}_\theta ]$, where $ \theta \in {\rm argmax}\{ \overline{y}_j - \underline{y}_j: j = 1, ..., n \} $. The Theorem~3 in~\cite{solving_qsrfp} shows it is possible to apply reducing technique by solving the PRLP in $ Y^{k,1} $ and $ Y^{k,2} $ to determine whether the optimal solution exists in $ Y^{k,1} $ and $ Y^{k,2} $.

The branch-and-bound algorithm is detailed as Algorithm~\ref{algo:bnb}, and $ LB $ and $ UB $ are the lower bound and upper bound of the optimal solution. With bisection in line 10, the algorithm reduces the search space with PRLP (reducing) in line 12 and updates $ LB $ and $ UB $ (bounding) in lines 13-24. The branch-and-bound search is used to search integer solutions as $ y^k $ is a non-integer solution in line 31. For more details, we refer the interested reader to Section~3 in~\cite{solving_qsrfp}.

\begin{algorithm}[t]
\caption{QSRFP Branch-and-Bound Algorithm}\label{algo:bnb}
\textbf{Input:} $ \epsilon $: termination error  \\
\hspace*{6ex}$ Y^0 $: initial solution space
\begin{algorithmic}[1]
\State $ UB_0 = +\infty $
\State $ y^0, LB_0 \gets {\rm solving \ PRLP}(Y^0) $
\If{ $ y^0 $ is feasible to QSRFP }
    \State $ UB_0 \gets {\rm min}\{ H_0(y_0), UB_0 \} $
\EndIf

\If{ $ UB_0 - LB_0 > \epsilon $ }
    \State $ \Pi_0 \gets \{ Y^0 \}, \ F \gets \emptyset, \ y^k \gets y^0 $
    
    \For{ $ k $ in 1,2,... }
        \State $ UB_k \gets UB_{k-1}, \ F \gets F \cup \{ Y^{k-1} \} $
        \State $ Y^{k,1}, Y^{k,2} \gets {\rm subdivides}\  Y^{k-1} $
        \For{ $ t $ in $ \{ 1, 2\} $}
            \State $ Y^{k,t} \gets {\rm reducing} \ Y^{k,t} $
            \State $ y^{k,t}, LB^{k,t} \gets {\rm solving \ PRLP}(Y^{k,t}) $
            \If{ $ y^{mid} $ of $ Y^{k,t} $ is feasible to QSRFP }
                \State $ UB_k \gets {\rm min}\{ H_0(y^{mid}), UB_k \}, y^k \gets y^{mid} $
            \EndIf
            \If{ $ y^{k,t} $ is feasible to QSRFP }
                \State $ UB_k \gets {\rm min}\{ H_0(y^{k,t}), UB_k \}, y^k \gets y^{k,t} $
            \EndIf
            \If{ $ UB_k \leq LB(Y^{k,t}) $ }
                \State $ F \gets F \cup \{ Y^{k,t} \} $
            \EndIf
            \State $ \Pi_k = \{Y|Y \in \Pi_{k-1} \cup \{ Y^{k,t}\}, Y \notin F\} $
            \State $ LB_k \gets {\rm min}\{ LB(Y) | Y \in \Pi_k \} $
        \EndFor
        \If{$ UB_k - LB_k \leq \epsilon $}
            \State break
        \EndIf
    \EndFor
\EndIf
\State $ y \gets $ branch-and-bound searching for integer solution with $ y^k $
\State \Return $ y $
\end{algorithmic}
\end{algorithm}

\section{Experiments and Analysis}\label{sec:experiments}

This section details the experiments conducted with respect to user count, user requirement category count, server average computing resources, and different system scales to investigate the algorithms' performance in different situations. In addition, to further study the algorithms' performance, this section also presents the algorithms' computing complexity and appropriate system size for our algorithms, which severely impacts the performance. Furthermore, suggestions for speeding up the algorithms are described to help the algorithm be better applied to real environments. Lastly, we present some experiments on the minimum deployment to study whether it affects the algorithms' performance or not.

\subsection{Experimental Setup}


All the services and servers are randomly generated in all experiments. The input/output data size ranged from 0--2000 KB. Service abilities ranged from 100--400, and computing resources ranged 1--3 units, where the computing resources of services with minimum computing resources were treated as 1 unit computing resources. The dependency graph $ DG $ was also generated randomly in each experiment, whose function chain length ranged 1--7 without cycles. First, several functions are selected uniformly at random as the functions requested by the users. After that, the function chain of each function is generated. During the chain generation, the patterns in previously generated chains are followed when generating new chains to avoid cycles. For example, once the chain $ A \rightarrow B \rightarrow C $ is generated, if $ B $ is selected as the successor of $ A $ in another chain, then $ C $ is automatically chosen as the successor of $ B $. \figurename~\ref{fig:sdg_example} shows the service dependency graph generated randomly in one experiment as an example. A circle with -1 stands for the users, and circles with other numbers denote the different services. The arrows between circles represent the respective call dependency.

\begin{figure}[!t]
    \centering
    \includegraphics[width=\linewidth]{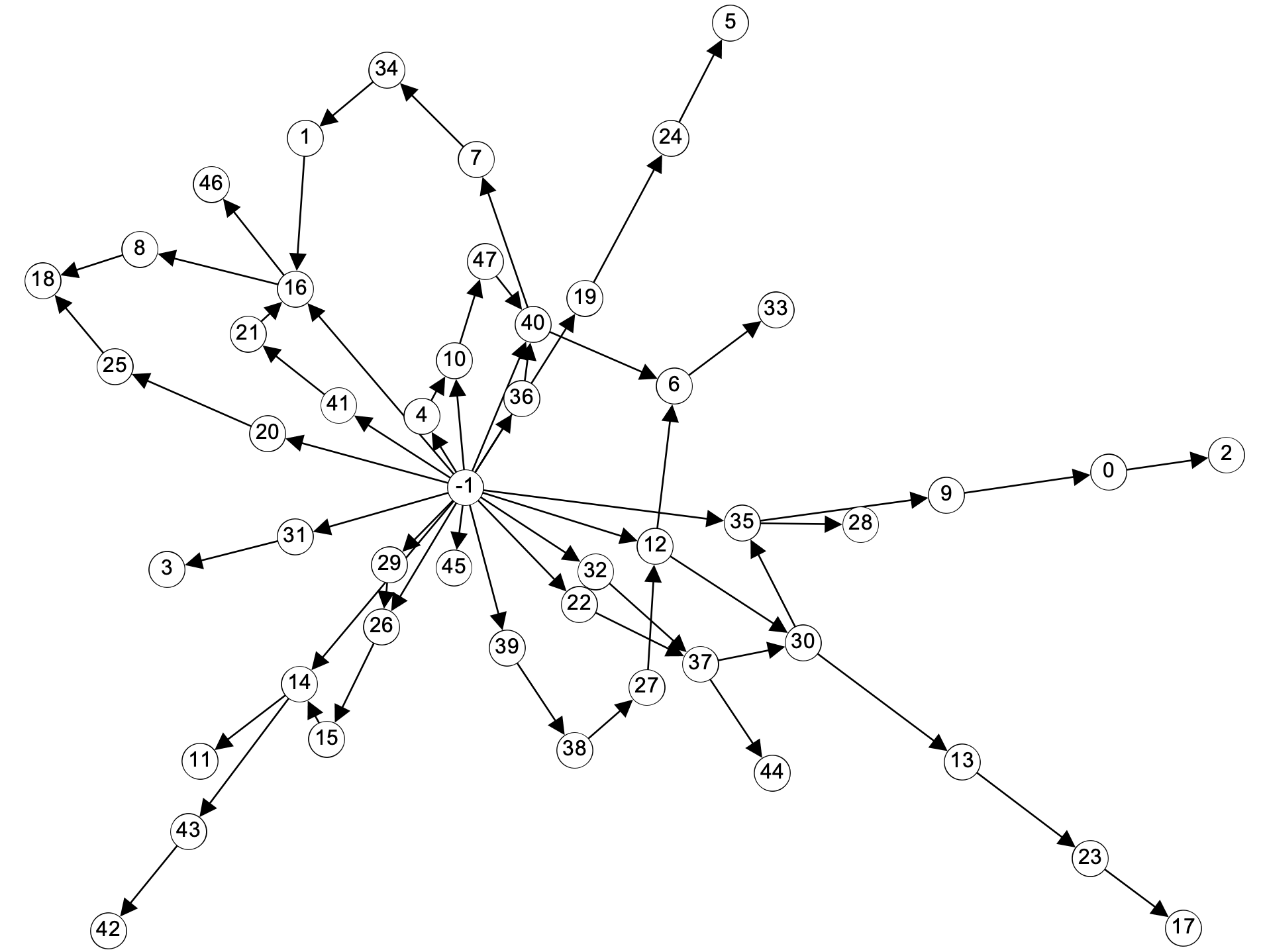}
    \caption{Service dependency graph generated randomly with 48 services}
    \label{fig:sdg_example}
\end{figure}

The servers' computing resources ranged between different experiments, and the delay and bandwidth ranged 1--10ms and 50--1000 MB/s, respectively. In addition, the user requirements were assigned to each server randomly, and the user count varied across different experiments.

\begin{figure*}[!t]
   \centering
   \subfloat[Average response time w.r.t. user count in Experiment~1.1]{
    \includegraphics[width=0.46\linewidth]{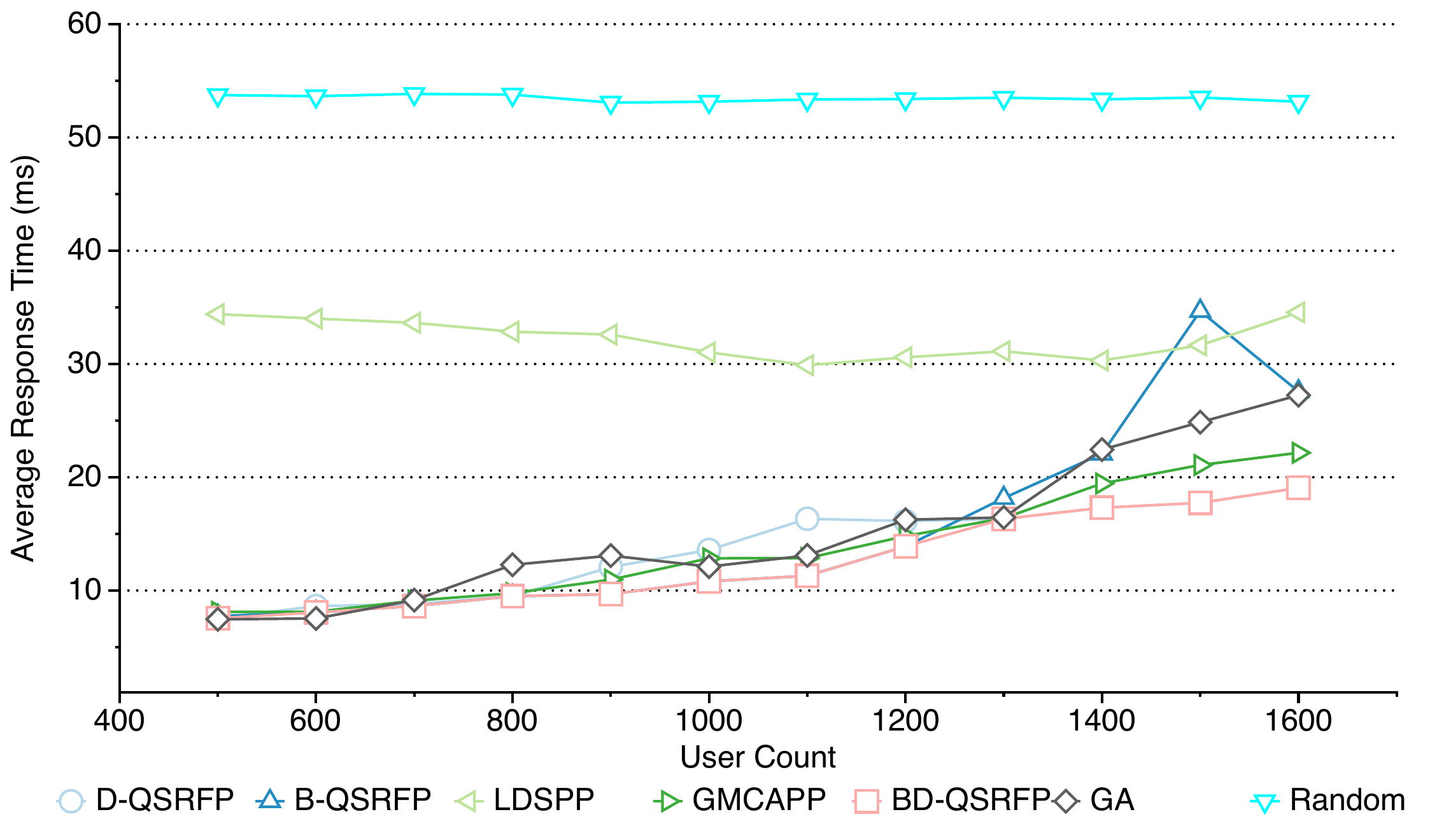}\label{fig:experiment_1_a5}
   }
   \subfloat[Execution time w.r.t. user count in Experiment~1.1]{
    \includegraphics[width=0.46\linewidth]{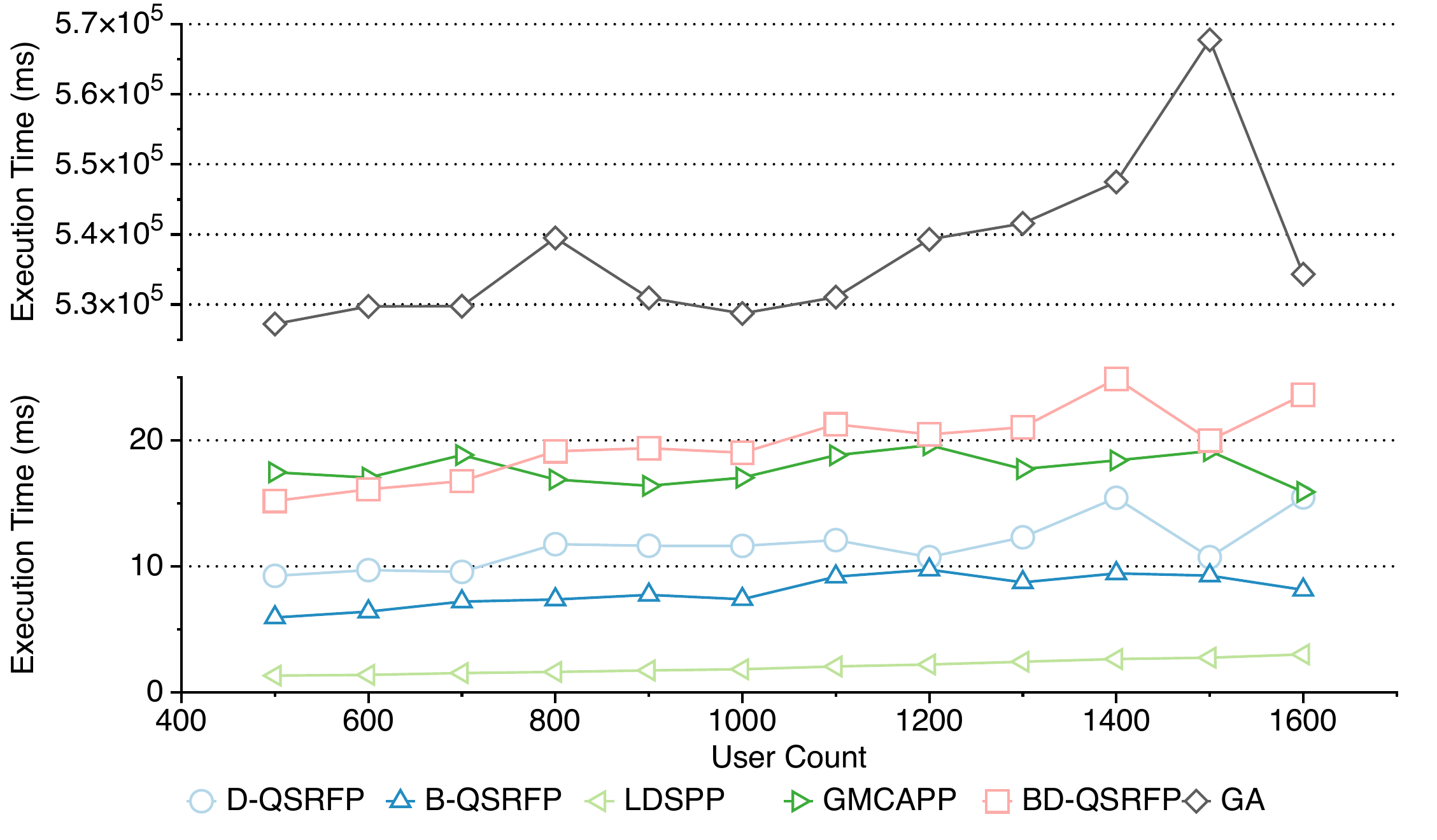}\label{fig:experiment_1_t5}
   }
   \\
   \subfloat[Average response time with w.r.t. user count in Experiment~1.2]{
    \includegraphics[width=0.46\linewidth]{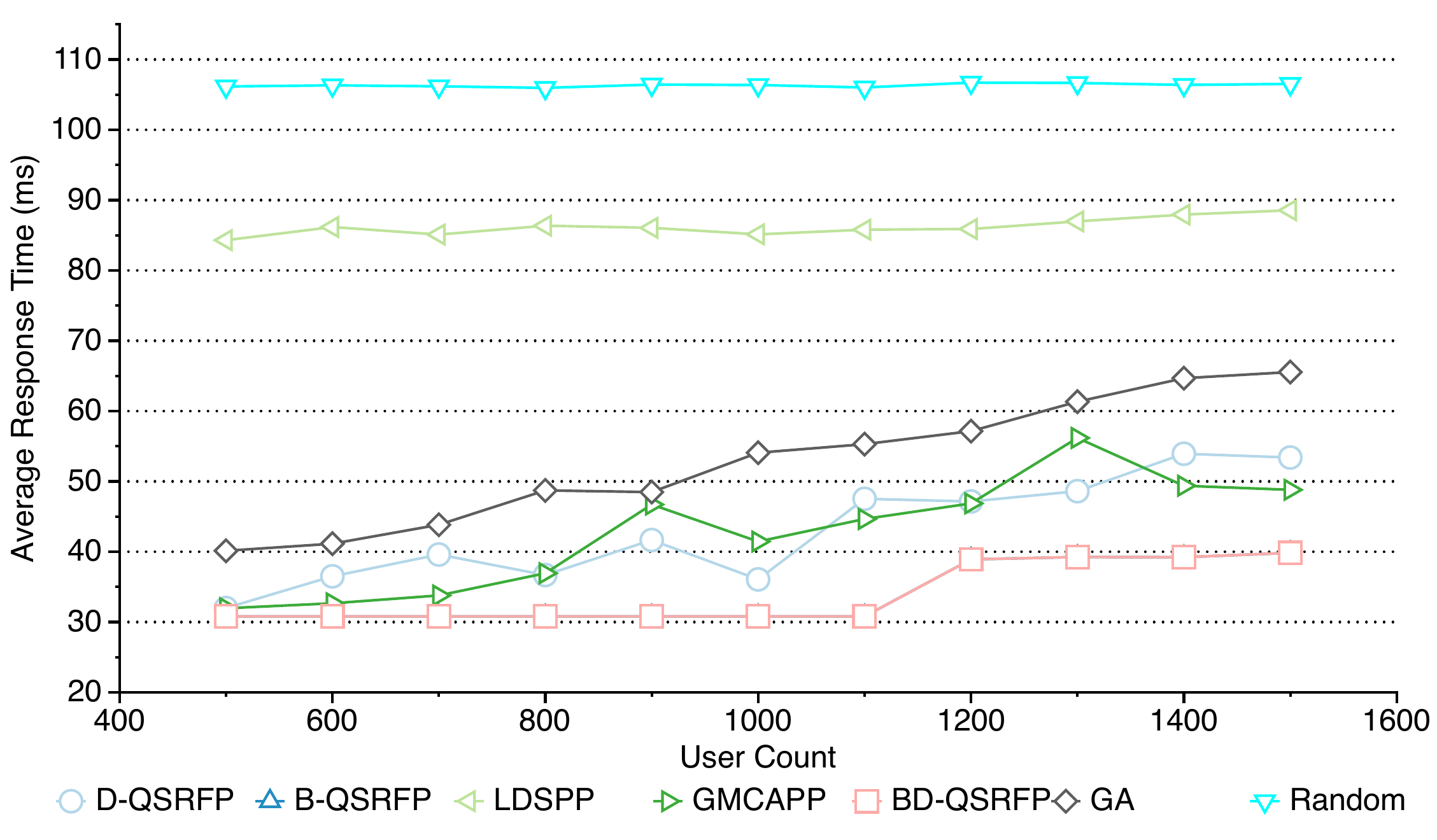}\label{fig:experiment_1_a10}
   }
   \subfloat[Execution time with w.r.t. user count in Experiment~1.2]{
    \includegraphics[width=0.46\linewidth]{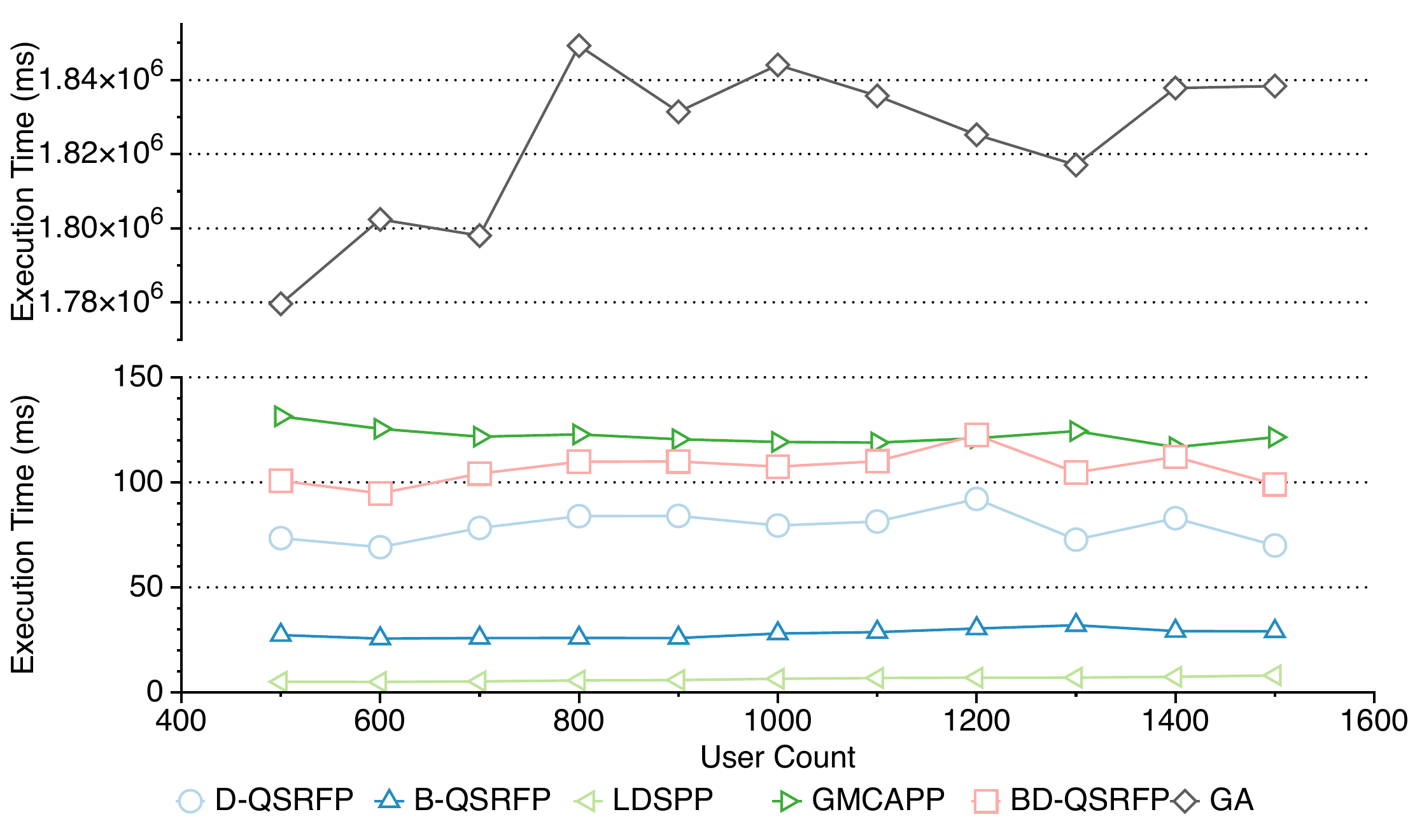}\label{fig:experiment_1_t10}
   }
   \caption{Experiment results w.r.t. user count in Experiment~1}\label{fig:experiment_1}
\end{figure*}

There were three kinds of our algorithms: BFS based algorithm B-QSRFP, DFS based algorithm D-QSRFP, and the combination of B-QSRFP and D-QSRFP denoted as BD-QSRFP. BD-QSRFP runs both B-QSRFP and D-QSRFP separately, and the best result is returned. 

The following algorithms were chosen to be compared with our proposed algorithms:

\begin{itemize}
    \item LDSSP: A lightweight decentralized service placement policy for multi-component application in fog computing proposed in~\cite{LDSPP} which places most popular services as close to the users as possible for lower latency and better network usage.
    \item GMCAPP: An efficient multi-component application online placement algorithm in MEC introduced in~\cite{GMCAPP} which aims to minimize the total cost of running the applications. It was modified with Theorem~\ref{theorem} as the optimization goal, and the deployment cost was changed to be updated after deploying an instance.
    \item GA: Genetic Algorithm is a widely used evolutionary algorithm~\cite{whitley1994genetic} that can solve the placement problem. The main operators of a GA are selection, crossover, and mutation. A feasible deployment scheme $ X $ was used as an individual in our experiments, and the two-point crossover was applied. Randomly delete/add/move an instance were implemented as the mutation with the same probability. 400 population size and a maximum of 400 rounds were used based on preliminary experiments. The tournament selection was adopted for crossover, and mutation probability was set to 0.3. We picked these parameters considering the performance and execution time: higher values lead to long execution times, and lower values make the algorithm hard to converge even on small-size data.
    \item Random: Randomly generating a deployment scheme. It runs for 100 times, and the average result was used as the final result to establish a baseline.
\end{itemize}

It should be noted that due to the high computing complexity of FPP, the problem was converted as QSRFP in all algorithms, which means Equation~\ref{eqa:T_i_j} was used instead.

All the algorithms were implemented in Python 3.6.8 without parallelism and run on the computer with Intel Xeon Gold 5120, 160GB, and CentOS 7. The average response time (ms) and the algorithm execution time (ms) were picked as the metrics. The Random algorithm is for performance comparison only, and its execution time is not included in all experiments. We conducted four experiments to investigate our algorithms' performance under different situations: user count, user requirement category count, server average computing resources, and different system scales.

\subsection{Experiment~1: User Count}

User count is one factor that could affect the deployment scheme due to the service capability of each service. When the number of users increases, the system needs to deploy more instances to satisfy the increasing request frequency. After taking the service dependencies into consideration, the system is also required to deploy more instances for services called by other services.

In Experiment~1, two experiments with different numbers of servers and services were conducted. Experiment~1.1 had 5 servers and 23 services, and Experiment~1.2 had 10 servers and 50 services. The user count ranges between 500--1600 and 500--1500 in Experiments~1.1 and 1.2, respectively. There is no 1600 user count in Experiment~1.2 because the server's computing resource is insufficient even deploying each service with the minimum number.

The results are shown in \figurename~\ref{fig:experiment_1}. For Experiment~1.1, it could be seen that D-QSRFP and B-QSRFP had similar performance overall compared to GMCAPP and GA in \figurename~\ref{fig:experiment_1_a5}, while D-QSRFP and B-QSRFP took less execution time than GMCAPP. BD-QSRFP got better performance than other algorithms but with more execution time than other algorithms except for GA in \figurename~\ref{fig:experiment_1_t5}. It is worth pointing out that GA had much more execution time than other algorithms because it spends lots of time calculating the fitness of each individual with Equation~\ref{eqa:TX_qsrfp}. Because GA is the evolutionary algorithm, it must calculate the fitness after generating new individuals after crossover and mutation. Moreover, GA also needs to repair the infeasible individuals after crossover and mutation because the individuals after crossover and mutation may violate constraints, which also incurs execution cost.

In Experiment~1.2, D-QSRFP performed similar to GMCAPP with less execution time as shown in \figurename~\ref{fig:experiment_1_a10} and~\ref{fig:experiment_1_t10}. B-QSRFP and BD-QSRFP always performed best, and the execution time of BD-QSRFP was less than GMCAPP. The GA results were worse than GMCAPP, D-QSRFP, and B-QSRFP, and the execution time of GA also increased a lot compared to Experiment~1.1, which was caused by more services and servers in Experiment~1.2. With more services and servers, GA had to spend more time on fitness evaluation, and 400 rounds were not enough to find better solutions than other algorithms.

For LDSPP and Random algorithms, their performance was stable all the time in both Experiments~1.1 and~1.2. Though LDSPP had the lowest execution time, it performed worse than others except for Random. The results of Random were stable in the whole experiment due to the same service set. Even though the instance count of each service differs between different user counts, the Random algorithm always tries to evenly deploy the instances of each service across each server. As a result, the evaluated value of Equation~\ref{eqa:TX_qsrfp} is a constant, and the Random algorithm gets similar results. It should be noted that the execution time of all algorithms except for GA increased slightly with user count because more instances needed to be deployed.

From Experiment~1, it could be seen that GA is not suitable as an online algorithm for this problem because it gets harder to find a better deployment scheme within acceptable execution time than other algorithms. B-QSRFP, D-QSRFP, and GMCAPP can get similar performance, while B-QSRFP and Q-QSRFP need less execution time. In order to further study the algorithms' performance, experiments w.r.t. user requirement category count was conducted.

\begin{figure*}[!t]
   \centering
   \subfloat[Average response time w.r.t. user requirement category count]{
    \includegraphics[width=0.46\linewidth]{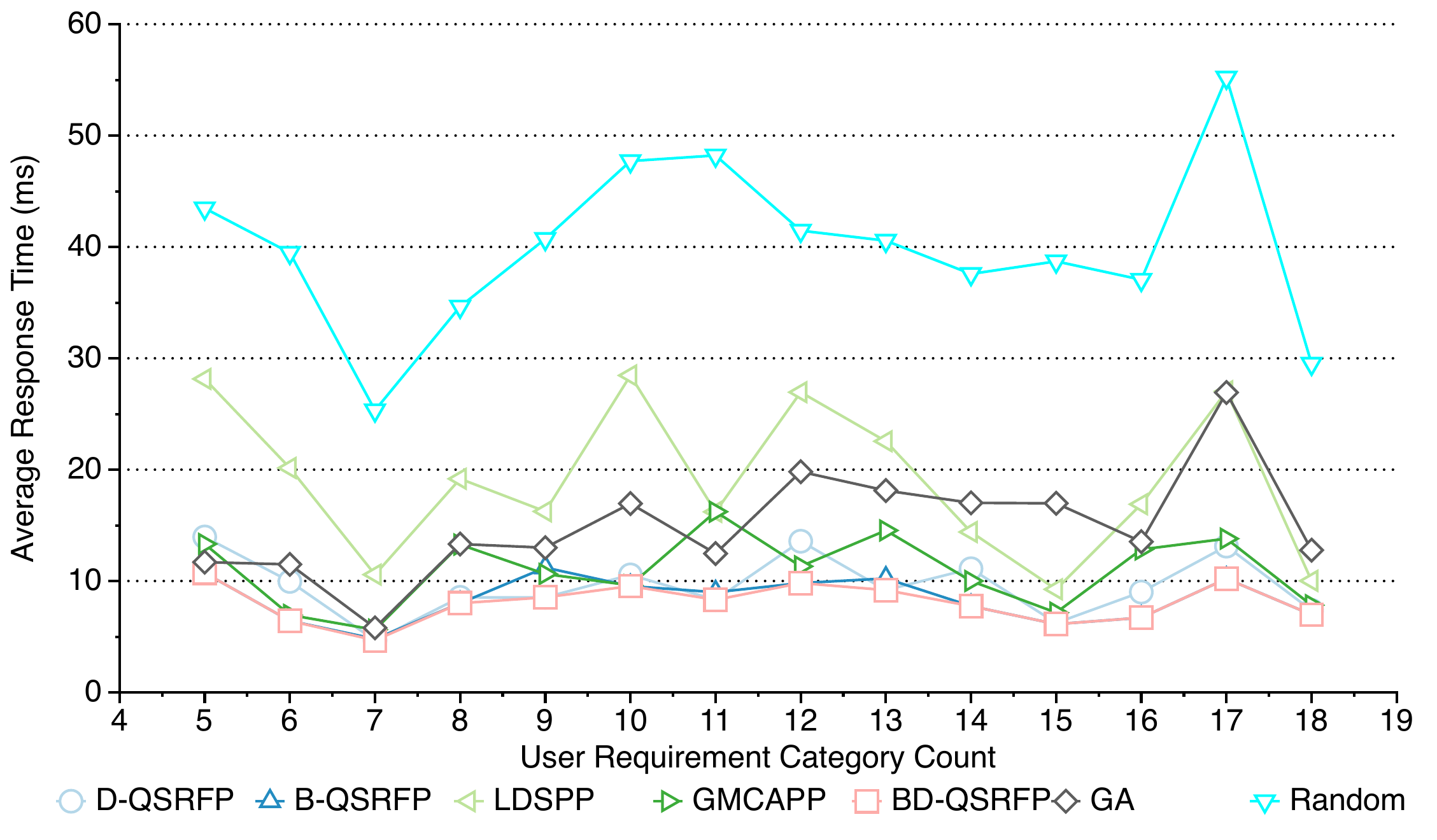}\label{fig:experiment_2_a}
   }
   \subfloat[Execution time w.r.t. user requirement category count]{
    \includegraphics[width=0.46\linewidth]{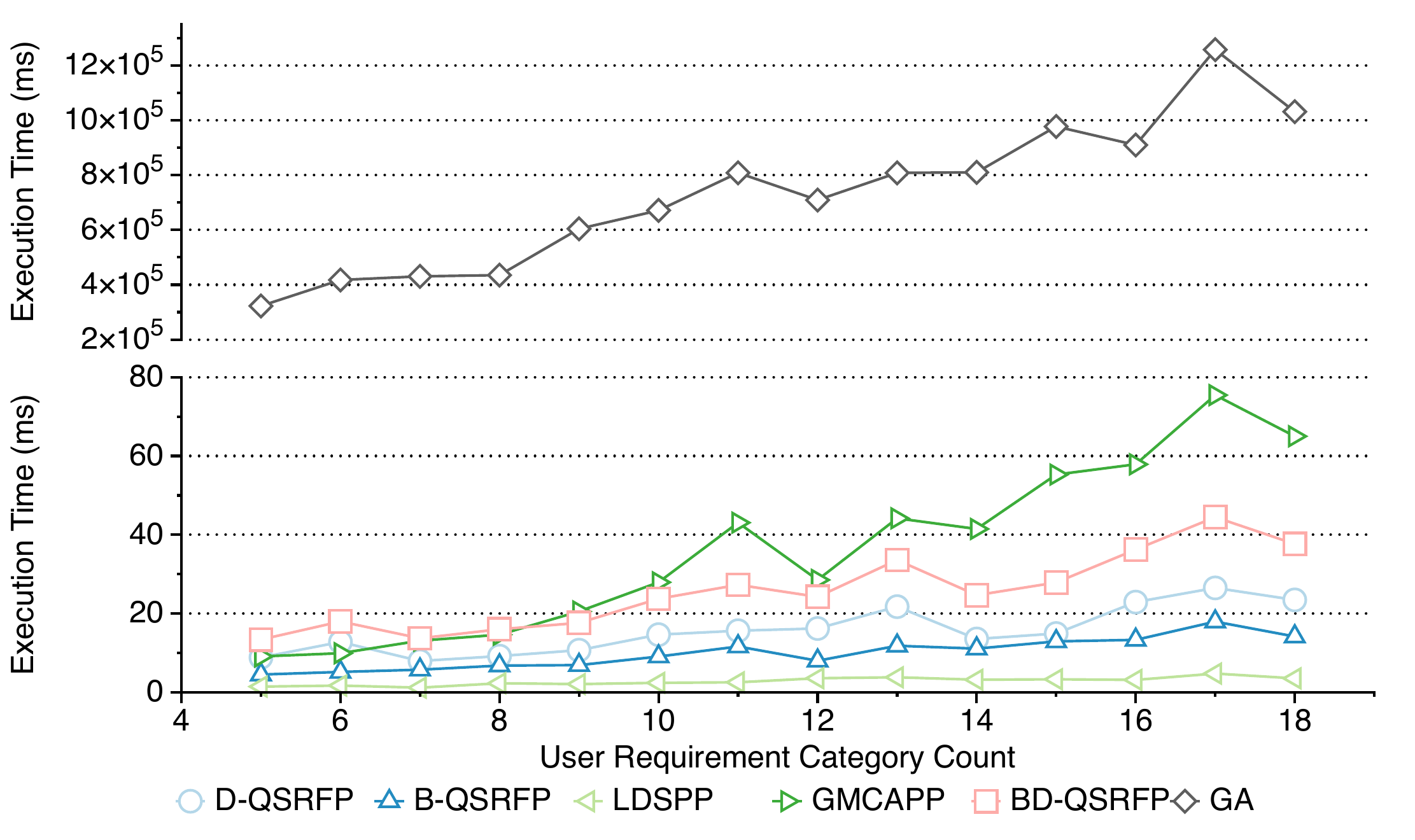}\label{fig:experiment_2_t}
   }
   \caption{Experiment results w.r.t. user requirement category count in Experiment~2}\label{fig:experiment_2}
\end{figure*}

\begin{figure*}[!t]
   \centering
   \subfloat[Average response time w.r.t. average server resources]{
    \includegraphics[width=0.46\linewidth]{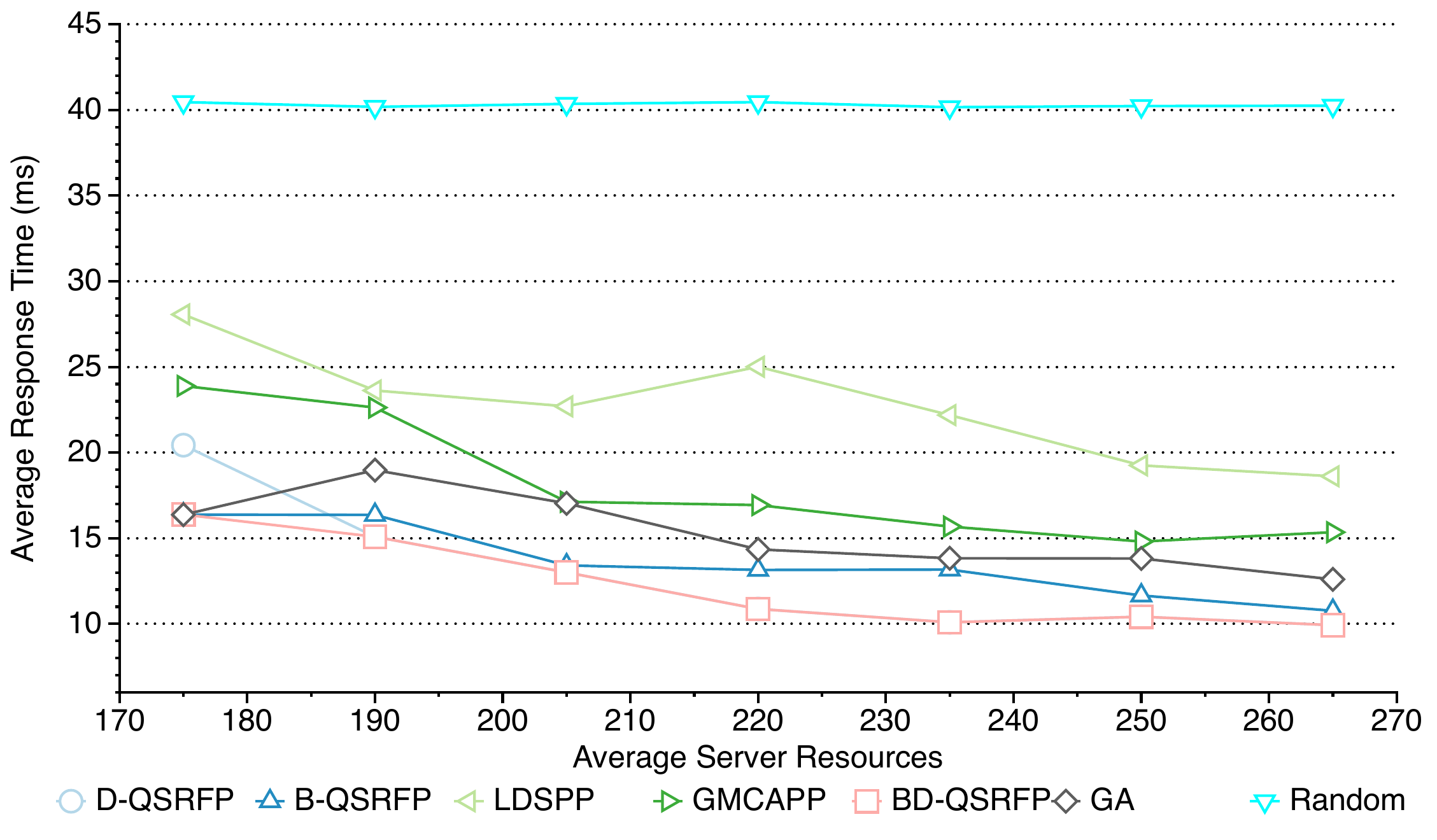}\label{fig:experiment_3_a}
   }
   \subfloat[Execution time w.r.t. average server resources]{
    \includegraphics[width=0.46\linewidth]{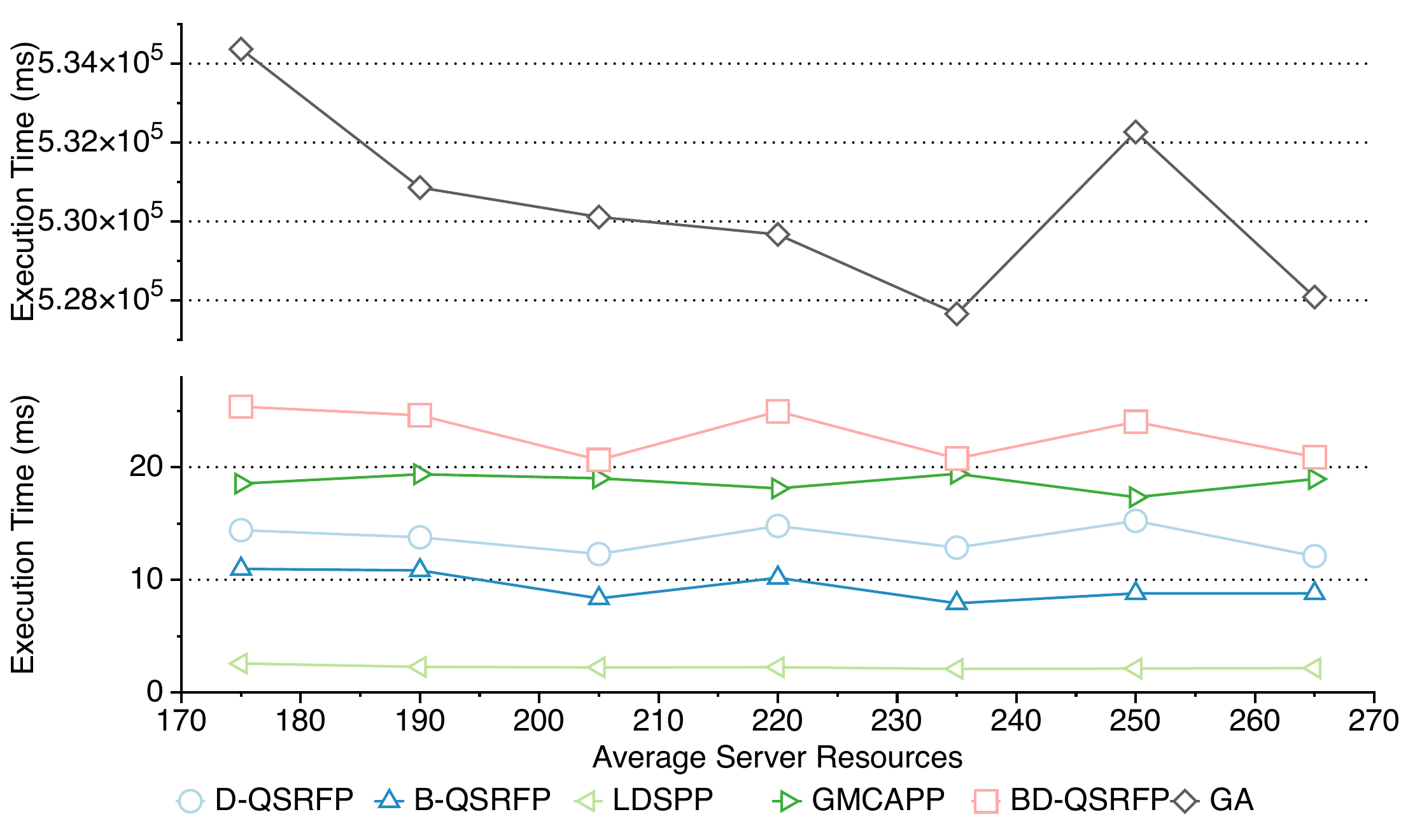}\label{fig:experiment_3_t}
   }
   \caption{Experiment results w.r.t. average server resources in Experiment~3}\label{fig:experiment_3}
\end{figure*}

\subsection{Experiment~2: User Requirement Category Count}

The user requirement category count shows how many different functions are called by users directly, and each has a function chain with an average length of 4. The user requirement category count varies from 5 to 18, and the service dependency also changes. The number of services increases with the user requirement category count and ranges between 15--48, while the number of servers and users keeps the same all the time.

The experiment results are shown in \figurename~\ref{fig:experiment_2}. We can see that the performance of B-QSRFP and D-QSRFP were better than other algorithms with 7, 8, 10, 11, 13, and 15--18 user requirement category count. In other cases, one of B-QSRFP and D-QSRFP got worse results than others, but the other could get the best performance, and as a result, BD-QSRFP was always better than everybody else. GA could get similar results as BD-QSRFP initially, and the results became worse as the number of services increases; this might be due to the increase in dimensionality and an insufficient computational budget.

In \figurename~\ref{fig:experiment_2_t}, all the algorithms got higher execution time with larger count of user requirement category. Less execution time was needed for D-QSRFP and B-QSRFP compared with other algorithms except for LDSPP. The execution time of G-MCAPP was lower than BQ-QSRFP at the beginning and higher than BQ-QSRFP as the number of services increases. Random got different results as the number of services change due to the different service set in this experiment.

The results of Experiment~2 show that B-QSRFP and D-QSRFP can outperform the other approaches in both time in quality, and BD-QSRFP is the best while requiring the highest execution time. Considering the servers' computing resources keep the same in this experiment, we conducted more experiments w.r.t. server computing resources to study the algorithms' performance under different system computing resource usages.

\subsection{Experiment~3: Average Server Resources}

\begin{figure*}[!t]
   \centering
   \subfloat[Average response time w.r.t. system scale in Experiment~4.1]{
    \includegraphics[width=0.46\linewidth]{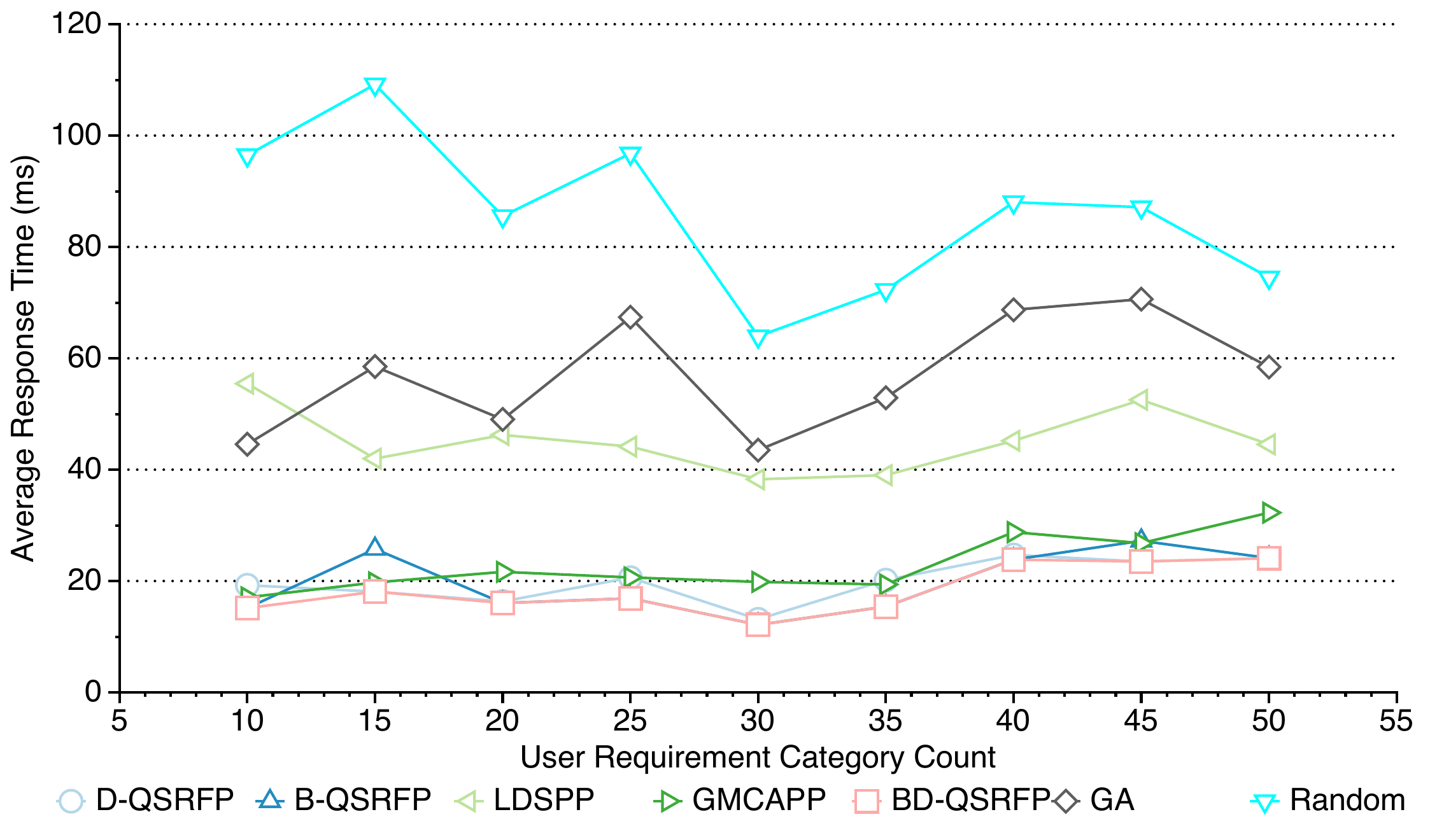}\label{fig:experiment_4_1_a}
   }
   \subfloat[Execution time w.r.t. system scale in Experiment~4.1]{
    \includegraphics[width=0.46\linewidth]{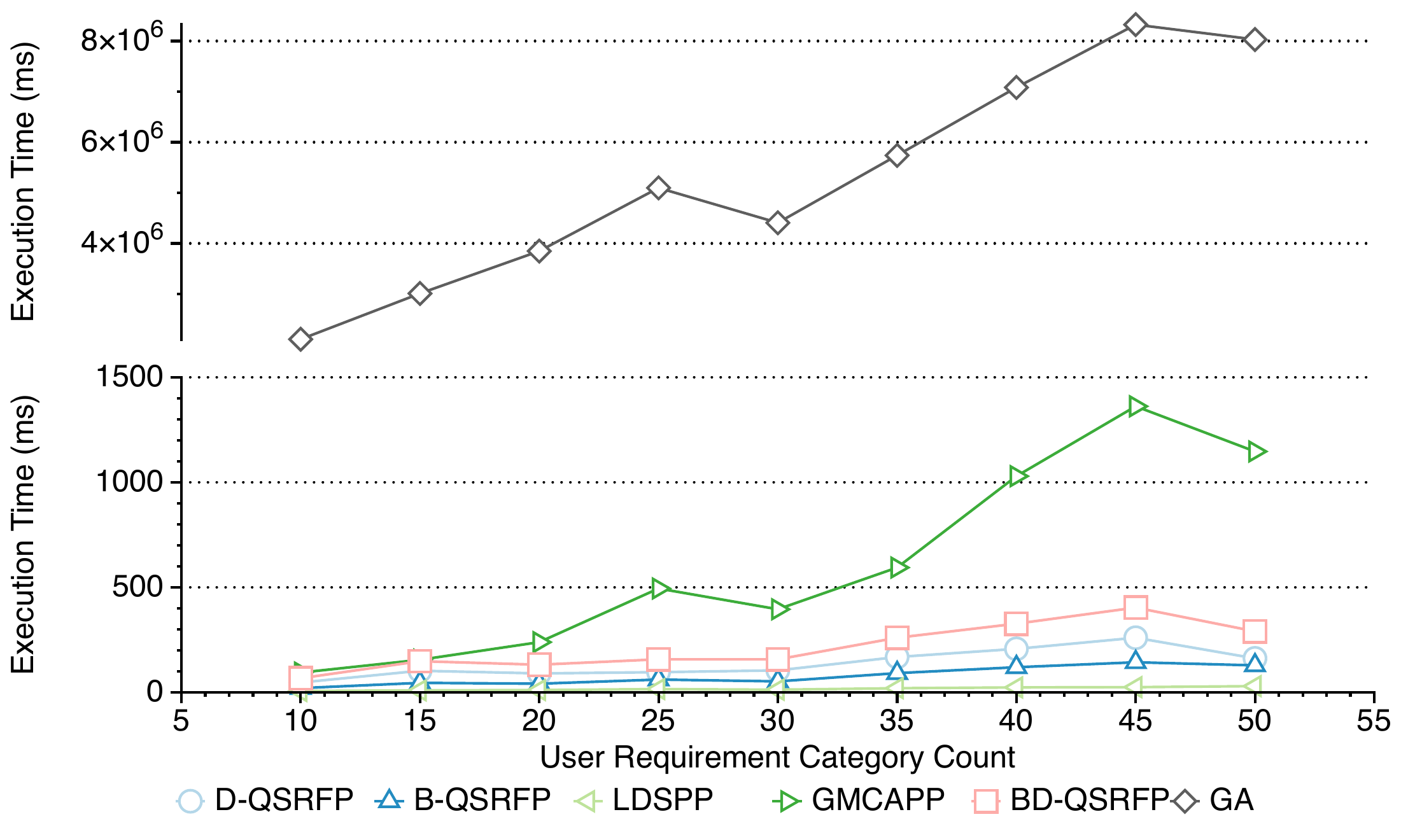}\label{fig:experiment_4_1_t}
   }
   \caption{Experiment results w.r.t. average server resources in Experiment~4.1}\label{fig:experiment_4_1}
\end{figure*}

\begin{figure*}[!t]
   \centering
   \subfloat[Average response time w.r.t. system scale in Experiment~4.2]{
    \includegraphics[width=0.46\linewidth]{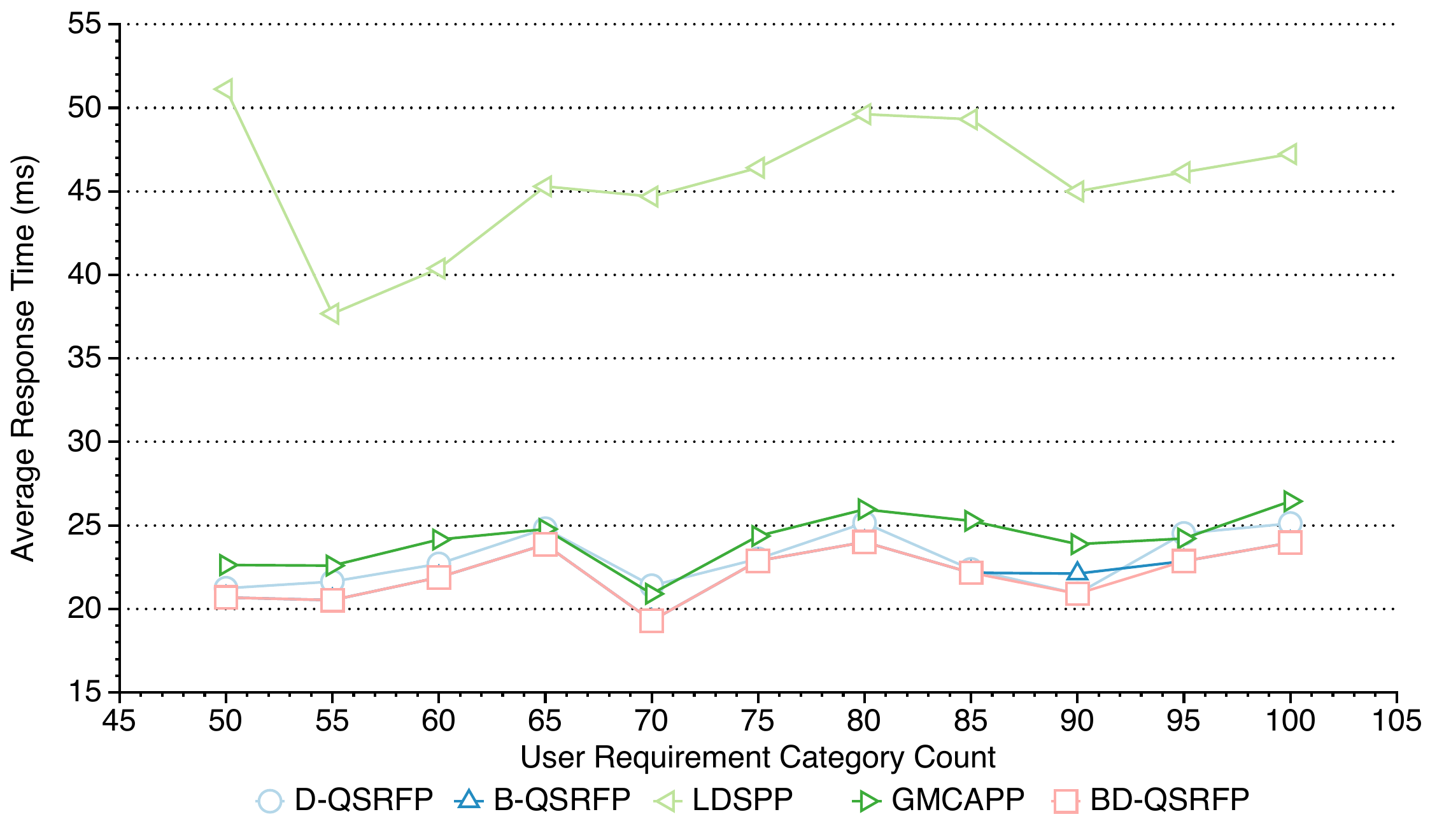}\label{fig:experiment_4_2_a}
   }
   \subfloat[Execution time w.r.t. system scale in Experiment~4.2]{
    \includegraphics[width=0.46\linewidth]{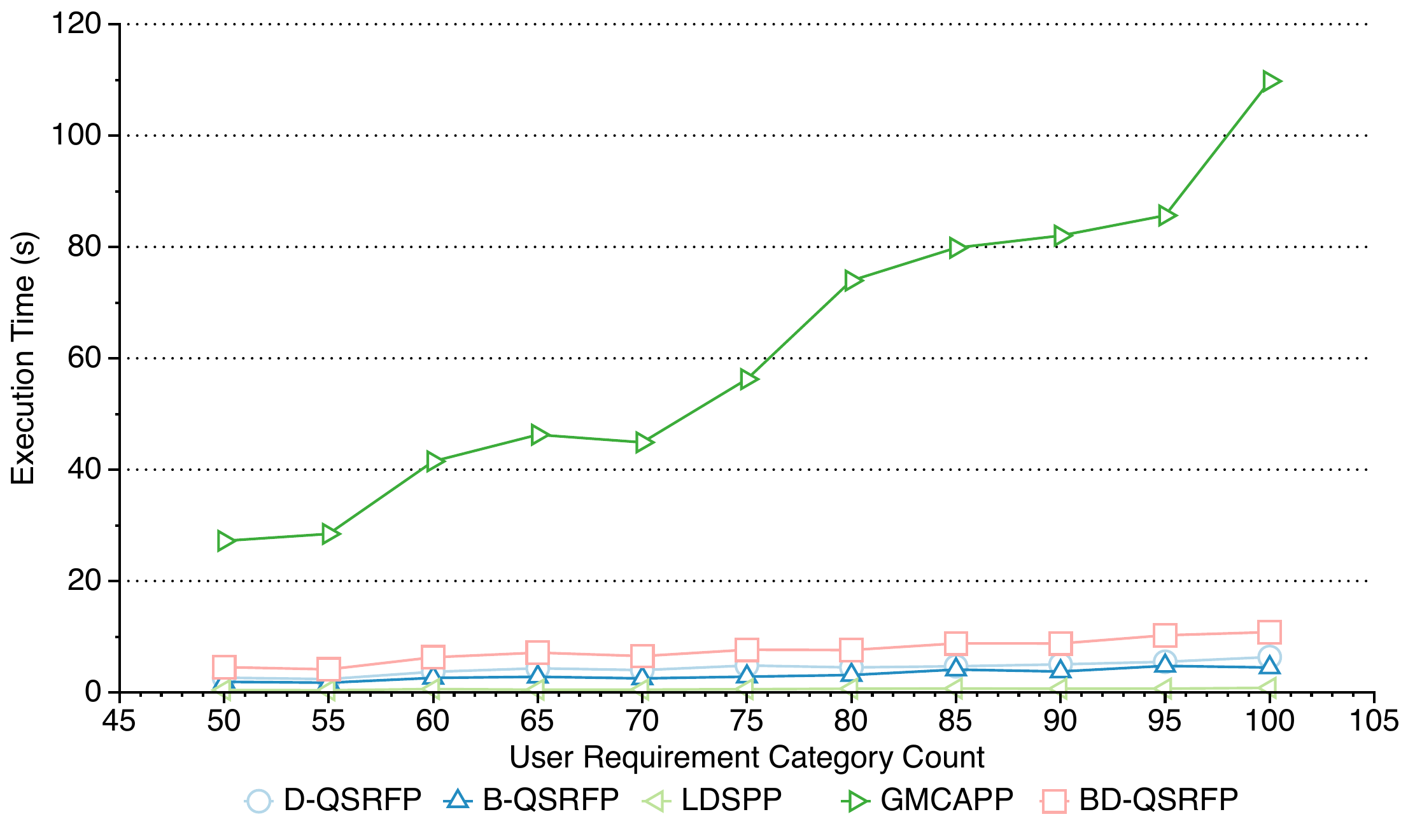}\label{fig:experiment_4_2_t}
   }
   \\
   \subfloat[Average response time with w.r.t. system scale in Experiment~4.3]{
    \includegraphics[width=0.46\linewidth]{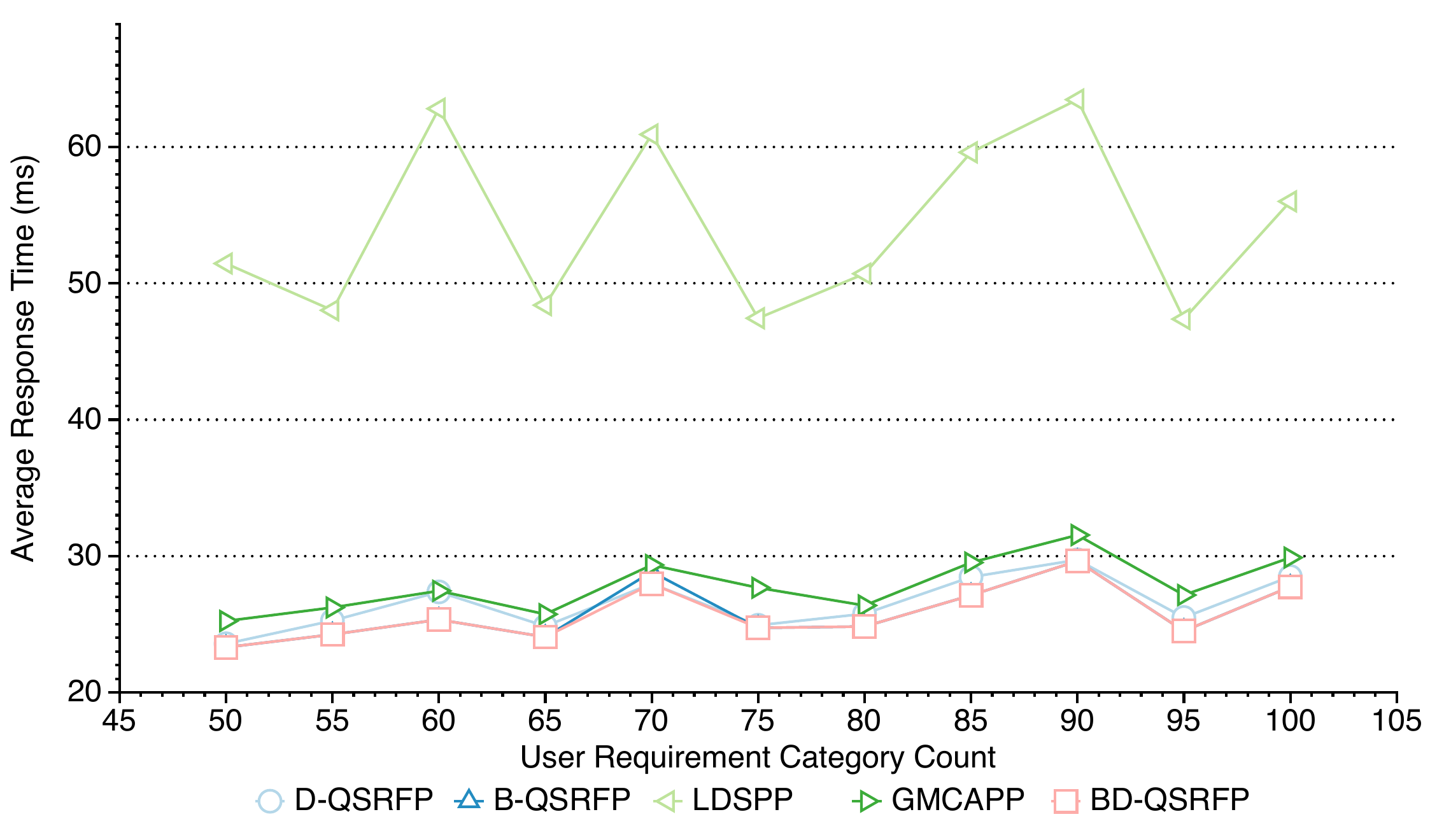}\label{fig:experiment_4_3_a}
   }
   \subfloat[Execution time with w.r.t. system scale in Experiment~4.3]{
    \includegraphics[width=0.46\linewidth]{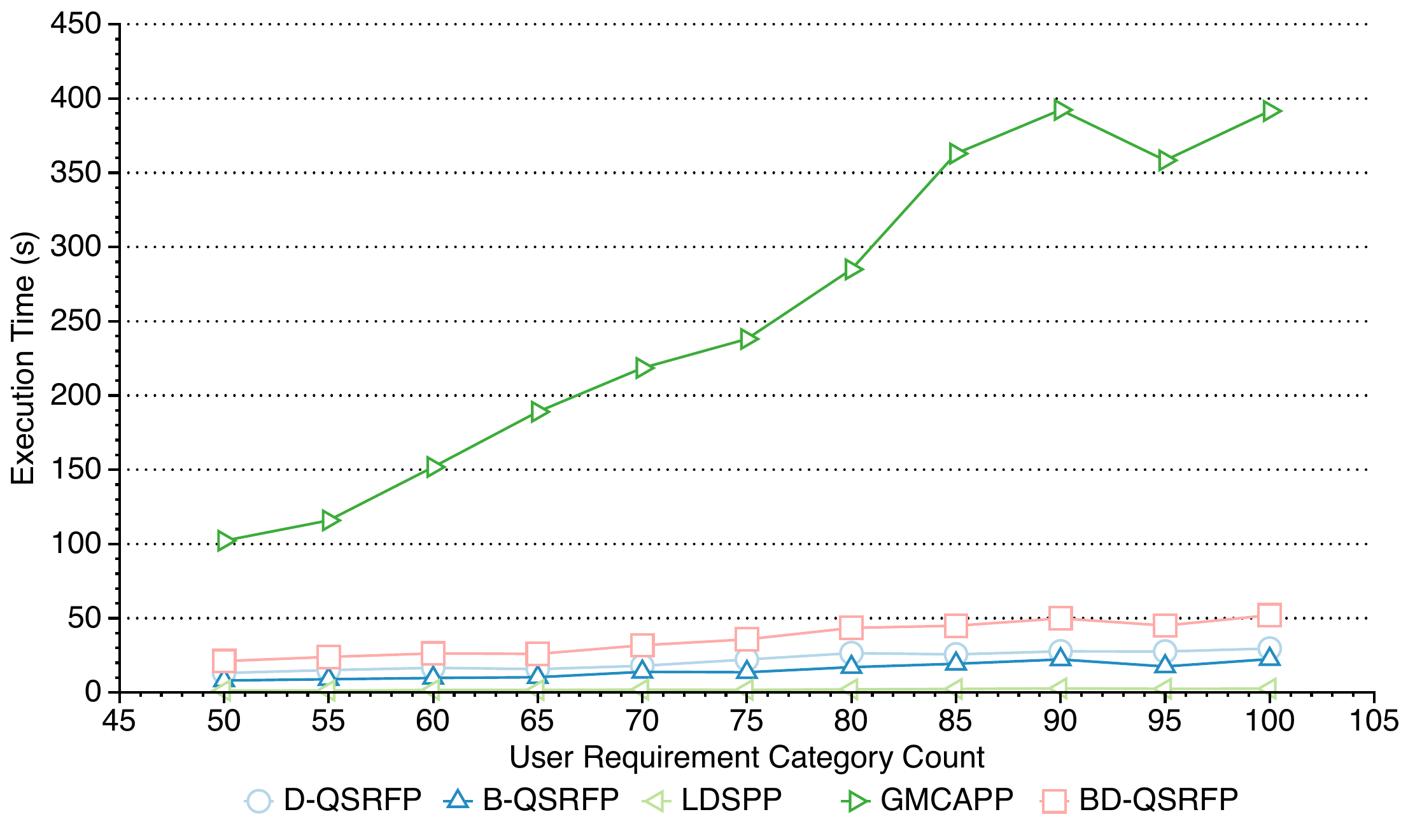}\label{fig:experiment_4_3_t}
   }
   \caption{Experiment results w.r.t. system scale in Experiment~4.2 and 4.3}\label{fig:experiment_4_23}
\end{figure*}

In this experiment, the average server resources are represented in the number of units, which stands for the computing resources used by the service with the minimal resource requirement. The number of servers keeps the same, and the average server resources range from 175 units to 265 units. The user count and the user requirement category count keep the same, and 175 units are the minimal average server resources that can hold enough instances to meet user requirements. As the average server resources increase, the system can deploy more service instances on each server.

Based on the results shown in in \figurename~\ref{fig:experiment_3}, B-QSRFP and D-QSRFP outperformed other algorithms in most situations in \figurename~\ref{fig:experiment_3_a}, and their execution time was lower than others except for LDSPP, while LDSPP got worse performance than them. Due to the small size of servers and services, GA got better results than GMCAPP in this experiment with the most execution time, which is still unacceptable as an online algorithm. BD-QSRFP still had the best performance in the whole experiment, and the performance of Random nearly kept the same as explained in Experiment~1.

It should be noted that the execution time of all algorithms except for GA is generally stable, but there are still some differences as the average server resources change in \figurename~\ref{fig:experiment_3_t}. It is because the instance number of each service keeps the same due to the same service set and servers. But with different average server resources, the instance number that can be run on each server differs, and the algorithm needs to spend more time to find an adequate server for each instance again when the resources of the server used before are insufficient. Thus the execution time varies.

The results in this experiment show that our algorithm B-QSRFP, D-QSRFP, and BD-QSRFP outperform others with better average response time and lower execution time. To investigate their performance with more servers and services, we conducted more experiments with different system scale.

\subsection{Experiment~4: System Scale} \label{subsec:experiment_4}

In this experiment, we enlarged the number of servers and services to check the algorithms' performance. Experiment~4.1 was conducted with 10 servers and 10-50 user requirement category count, and the maximal service size was 130, while the user count kept the same. The results are shown in \figurename~\ref{fig:experiment_4_1}.

The results in \figurename~\ref{fig:experiment_4_1_a} and~\ref{fig:experiment_4_1_t} are similar to our previous experiments. B-QSRFP and D-QSRFP could get the best or similar results compared to others with less execution time, and BD-QSRFP had the best performance all the time, while its execution time was higher. GA performed worse than other experiments due to the larger size of services and servers with more execution time. GMCAPP's execution time increased a lot when the user requirement category count increased.

To further study the algorithms' performance, Experiments~4.2 and~4.3 were conducted. There were 50 servers and 100 servers in Experiments~4.2 and~4.3, respectively. The user requirement category count ranged between 50--100 in both experiments, and the service size ranged between 150-320. Because GA performed worse than others, GA and Random were not used in Experiments~4.2 and~4.3. The results are shown in \figurename~\ref{fig:experiment_4_23}.

Based on the results in \figurename~\ref{fig:experiment_4_2_a} and~\ref{fig:experiment_4_3_a}, B-QSRFP and D-QSRFP also got better results than GMCAPP and LDSPP, and BD-QSRFP had the best performance even with large scale system. The execution time of B-QSRFP, D-QSRFP, and BD-QSRFP were lower than GMCAPP in \figurename \ref{fig:experiment_4_2_t} and~\ref{fig:experiment_4_3_t}. It should be pointed out that the maximum execution times of BD-QSRFP were about 11s and 52s in Experiments~4.2 and~4.3, respectively, which is too costly for an online algorithm. However, 100 servers are too many for the microservice system, and other technologies could improve the algorithms' speed. For more details, please refer to Section~\ref{subsubsec:system_size} and~\ref{subsubsec:speed_up}.

\subsection{Analysis}

In this section, we analyse the algorithms' computing complexity and discuss appropriate system sizes. We suggest ways for speeding up the algorithms and investigate the performance impact of a minimum instance deployment.

\subsubsection{Computing Complexity Analysis} \label{subsubsec:complexity}

The complexity of Algorithm~\ref{algo:best_server} is $ O(2|N|^3) $. In Algorithm~\ref{algo:deploy}, it would be $ \Omega(2|N|^3) $ in the ideal situation, which means sufficient computing resources and no dependency at all. In the worst situation, it would not exceed $ O(2 + 2|S|)|N|^4 $: all the services are needed to be re-deployed, and Algorithm~\ref{algo:best_server} must run $ |N| $ times to deploy all instances. For B-QSRFP and D-QSRFP, it should be $ O(2 + 2|S|)|S||N|^4 $.

\subsubsection{Appropriate System Size} \label{subsubsec:system_size}

According to Section~\ref{subsubsec:complexity}, the server size has an essential impact on our algorithms' performance, which means the microservice system should have an appropriate server size so that the execution time taken by our algorithms is affordable as an online algorithm.

Based on the results in Section~\ref{subsec:experiment_4}, 100+ servers are too many for our algorithms even after adopting the suggestions in Section~\ref{subsubsec:speed_up} to speed up. In our opinion, the microservice system should not hold too many servers at the same time during the service placement or system evolution. On the one hand, the cost of collecting the running information from all servers can be huge, i.e., the bandwidth and the evolution delay due to the data transmission. On the other hand, as a distributed system, the stability of the distributed system should be ensured during the placement or evolution, and centralized algorithms should not be used for too many servers.

Thus, for a large microservice system with thousands of servers, we suggest dividing it into several small groups whose server size does not exceed 100 to reduce the cost during the information collection and keep the distributed system's stability while makes the algorithms' execution time is affordable as online algorithms.

\subsubsection{Suggestions on Speeding up} \label{subsubsec:speed_up}

There are two methods to speed up our algorithms. One is implementing the algorithms in a high-performance programming language like C++. Because our algorithms were implemented in pure Python, the speed should be improved a lot due to the low performance of Python compared to C++.

Another is adopting multi-threads technology. It could be seen that Algorithm~\ref{algo:best_server} plays an important role in the algorithms' performance. The algorithm calculates the average time for each server by assuming deploying one instance on each server, and it is safe to do it with $ |N| $ threads, which also can speed up the algorithms.

Using B-QSRFP or D-QSRFP instead of BD-QSRFP in a large-scale system also helps to reduce execution time. Though B-QSRFP and D-QSRFP cannot get the best results, their execution time is lower than BD-QSRFP.

\subsubsection{Impact of Minimum Deployment on Performance}

The strategy used in our algorithms only deploy instances that just satisfy the total request rate from users and services. For example, assuming the total request rate of $ s_i $ is $ \lambda_i $, then $ \lceil \frac{\lambda_i}{\mu_i} \rceil $ instances are deployed in the system. According to Equation~\ref{eqa:T_i_j}, it is possible to improve the average response time by deploying more instances. To investigate the performance improvement with more instances, some experiments were conducted.

At the end of Algorithm~\ref{algo:bfs_placement} and~\ref{algo:dfs_placement}, the algorithms deploy more instances in order to lower average response time. They search the service $ s_i $ and the server $ n_j $ that can make the average response time decrease the most when deploying an instance of $ s_i $ on $ n_j $, and deploy $ s_i $ on $ n_j $ until  the constraints are satisfied. The dataset is the same as Experiment~2, and the results are shown in Table~\ref{tab:push_bound}. D-QSRFP and B-QSRFP are abbreviated as D and B, and D' and B' stand for the algorithms that deploy more instances for improvement until meeting constraints.

\begin{table}[htbp]
\caption{Results with/without minimal instances} \label{tab:push_bound}
\centering
\begin{adjustbox}{width=\linewidth}
\begin{tabular}{c|cccc|cccc} 
\hline
\multirow{2}{*}{Count} & \multicolumn{4}{c|}{Average Response Time}                                                           & \multicolumn{4}{c}{Execution Time}  \\ 
\cline{2-9}
                       & D     & D'                                        & B     & B'                                       & D     & D'     & B     & B'         \\ 
\hline
5                      & 13.96 & 13.96                                     & 10.74 & 10.74                                    & 8.87  & 9.34   & 4.53  & 5.22       \\
6                      & 9.97  & 9.97                                      & 6.43  & 6.43                                     & 12.80 & 15.13  & 5.20  & 6.95       \\
7                      & 4.68  & 4.68                                      & 4.80  & 4.80                                     & 7.97  & 9.25   & 5.76  & 7.58       \\
8                      & 8.54  & 8.54                                      & 8.01  & 8.01                                     & 9.21  & 11.31  & 6.81  & 9.00       \\
9                      & 8.54  & 8.54                                      & 11.21 & 11.21                                    & 10.72 & 12.68  & 6.92  & 9.04       \\
10                     & 10.55 & 10.55                                     & 9.55  & {\cellcolor[rgb]{0.753,0.753,0.753}}9.53 & 14.66 & 17.15  & 9.10  & 27.79      \\
11                     & 8.31  & 8.31                                      & 9.00  & 9.00                                     & 15.65 & 18.59  & 11.65 & 14.36      \\
12                     & 13.58 & 13.58                                     & 9.82  & {\cellcolor[rgb]{0.753,0.753,0.753}}9.78 & 16.22 & 18.64  & 8.03  & 164.42     \\
13                     & 9.19  & {\cellcolor[rgb]{0.753,0.753,0.753}}9.17  & 10.24 & 10.24                                    & 21.76 & 61.40  & 11.84 & 15.77      \\
14                     & 11.11 & {\cellcolor[rgb]{0.753,0.753,0.753}}10.71 & 7.75  & 7.75                                     & 13.57 & 332.13 & 11.11 & 14.05      \\
15                     & 6.24  & 6.24                                      & 6.13  & 6.13                                     & 14.94 & 18.00  & 12.94 & 16.18      \\
16                     & 9.01  & 9.01                                      & 6.71  & 6.71                                     & 22.92 & 28.32  & 13.33 & 17.86      \\
17                     & 13.13 & 13.13                                     & 10.22 & 10.22                                    & 26.52 & 30.38  & 17.96 & 22.72      \\
18                     & 7.37  & 7.37                                      & 6.98  & 6.98                                     & 23.47 & 27.23  & 14.18 & 18.02      \\
\hline
\end{tabular}
\end{adjustbox}
\end{table}

The results with gray color show that deploying more instances can improve the average response time at the cost of a much higher execution time. Thus, it should be used only when the execution time is acceptable.

\section{Conclusion and Future Work}\label{sec:conclusion}

This paper has formulated the service placement problem in microservice systems with complex dependencies as a FPP considering the dependencies between services and support for multiple service instances. Due to the high computing complexity of FPP, we have proven that we can convert the FPP to QSRFP, which results in a significant decrease in computational complexity. Based on the QSRFP, we have proposed the greedy-based algorithms B-QSRFP, D-QSRFP, and BD-QSRFP. The experimental evaluation has shown that our algorithms outperform other approaches in both computation time and quality. We have discussed ways of speeding up our approaches and discussed appropriate system sizes.

Possible future work includes the programming framework and system infrastructure supporting self-adaptive microservice system evolution in multiple groups: the large-scale microservice system must be divided into several groups to maintain the stability of the distributed microservice system. The cost of system information collection and evolution time can also be reduced with groups, and each group can evolve and cooperate with other groups.

\appendices


\ifCLASSOPTIONcompsoc
  \section*{Acknowledgments}
\else
  \section*{Acknowledgment}
\fi

Research in this paper is partially supported by the National
Key Research and Development Program of China (No
2018YFB1402500), the National Science Foundation of China
(61832014, 61772155, 61832004), as well as by the Australian Research Council (DP200102364, DP210102670).

\ifCLASSOPTIONcaptionsoff
  \newpage
\fi



%




\bibliographystyle{IEEEtran}
\bibliography{refs.bib}

%

\begin{IEEEbiography}[{\includegraphics[width=1in,height=1.25in,clip,keepaspectratio]{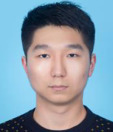}}]{Xiang He} received his B.S. degree from the School of Computer Science and Technology, Harbin Institute of Technology in 2018. He is currently pursuing the Ph.D. degree in software engineering at the Harbin Institute of Technology (HIT), China. His research interests include edge computing, microservice system infrastructure design, and self-adaptive system.
\end{IEEEbiography}

\begin{IEEEbiography}[{\includegraphics[width=1in,height=1.25in,clip,keepaspectratio]{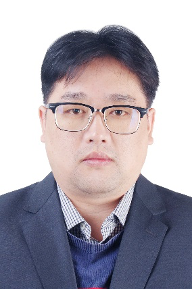}}]{Zhiying Tu} is an associate professor in the Faculty of Computing at Harbin Institute of Technology (HIT). He holds a PhD degree in Computer Integrated Manufacturing (Productique) from the University of Bordeaux. Since 2013, he began to work at HIT. His research interest is Service Computing, Enterprise Interoperability, and Cognitive Computing. He has 20 publications as edited books and proceedings, refereed book chapters, and refereed technical papers in journals and conferences. He is a member of the IEEE Computer Society and of the CCF China.
\end{IEEEbiography}

\begin{IEEEbiography}[{\includegraphics[width=1in,height=1.25in,clip,keepaspectratio]{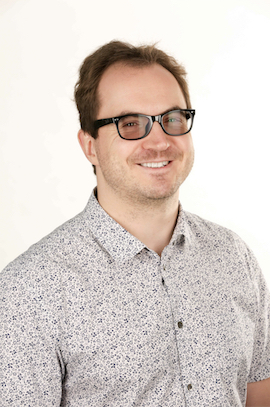}}]{Markus Wagner}
is an associate professor in the School of Computer Science at the University of Adelaide. His works on single- and multi-objective optimisation span theory-driven algorithm development and applications in domains such as search-based software engineering and renewable energy. He has written over 150 articles with over 150 co-authors. He served as founding chair of two IEEE CIS Task Forces, and he has chaired education-related committees within the IEEE CIS.
\end{IEEEbiography}

\begin{IEEEbiography}[{\includegraphics[width=1in,height=1.25in,clip,keepaspectratio]{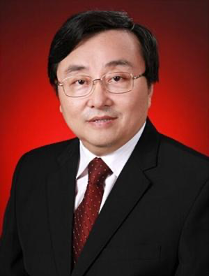}}]{Xiaofei Xu} is a professor at Faculty of Computing, and vice president of the Harbin Institute of Technology. He received the Ph.D. degree in computer science from Harbin Institute of Technology in 1988. His research interests include enterprise intelligent computing, services computing, internet of services, and data mining. He is the associate chair of IFIP TC5 WG5.8, chair of INTEROP-VLab China Pole, fellow of China Computer Federation (CCF), and the vice director of the technical committee of service computing of CCF. He is the author of more than 300 publications. He is member of the IEEE and ACM.
\end{IEEEbiography}


\begin{IEEEbiography}[{\includegraphics[width=1in,height=1.25in,clip,keepaspectratio]{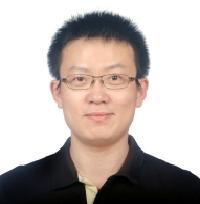}}]{Zhongjie Wang} is a professor at Faculty of Computing, Harbin Institute of Technology (HIT). He received the Ph.D. degree in computer science from Harbin Institute of Technology in 2006. His research interests include services computing, mobile and social networking services, and software architecture. He is the author of more than 80 publications. He is a member of the IEEE.
\end{IEEEbiography}





\end{document}